\documentclass[a4paper,onecolumn,11pt,unpublished,accepted=2025-09-23]{quantumarticle}
\pdfoutput=1

\usepackage[utf8]{inputenc}
\usepackage[english]{babel}

\usepackage{changepage}
\usepackage{csquotes}
\usepackage[backend=biber, style=numeric, sorting=none,url=false,isbn=false,doi = false,date =year,eprint=true,]{biblatex}
\addbibresource{bibliography.bib}

\newbibmacro{string+doi}[1]{
  \iffieldundef{doi}{#1}{\href{https://doi.org/\thefield{doi}}{#1}}}
\DeclareFieldFormat{title}{\usebibmacro{string+doi}{#1}}
\DeclareFieldFormat[article,thesis]{title}{\usebibmacro{string+doi}{\mkbibquote{#1}}}
\DeclareFieldFormat{journaltitle}{#1\isdot}

\usepackage{microtype}
\usepackage[T1]{fontenc}
\usepackage{amsmath}
\usepackage{amssymb}
\usepackage{amsthm}
\usepackage{dsfont} %identity operator
\usepackage{mathtools}
\usepackage[colorlinks=true, linkcolor=blue, urlcolor=blue, citecolor = blue]{hyperref}
\usepackage{scalerel}
\usepackage{thmtools,thm-restate}
\usepackage{tikz}
\usepackage{lipsum}
\usepackage{soul}
\usepackage{mdframed}
\usepackage{tikz-cd}
\usepackage{quiver}

\newtheorem{theorem}{Theorem}[section]
\newtheorem*{theorem*}{Theorem}
\newtheorem{lemma}[theorem]{Lemma}
\newtheorem{corollary}[theorem]{Corollary}

\theoremstyle{definition}
\newtheorem{definition}{Definition}[section]
\theoremstyle{remark}
\newtheorem*{remark}{Remark}
\newtheorem{conjecture}{Conjecture}

\newcommand{\CL}{\mathcal{L}}
\newcommand{\CLp}{\mathcal{L}^{\perp}}
\newcommand{\CS}{\mathcal{S}}
\newcommand{\CC}{\mathcal{C}}

\newcommand{\CH}{\mathcal{H}}
\newcommand{\CT}{\mathcal{T}}

\newcommand{\CD}{\mathcal{D}}
\newcommand{\R}{\mathbb{R}}
\newcommand{\C}{\mathbb{C}}
\newcommand{\N}{\mathbb{N}}
\newcommand{\T}{\mathbb{T}}

\newcommand{\mgkp}{\mathcal{M}_{\textnormal{GKP}}}
\newcommand{\mgkpd}{\mathcal{M}_{\textnormal{GKP}}^{\CD}}
\newcommand{\mlat}{\mathcal{M}_{\textnormal{lat}}}
\newcommand{\GU}{\overline{U}_{G}}

\definecolor{light-gray}{gray}{0.65}
\newmdtheoremenv[linewidth=2 pt, topline=false, bottomline=false, linecolor= light-gray, rightline=false, 
leftmargin=1pt, innerleftmargin=0.4em, rightmargin=0pt, innerrightmargin=0pt, innertopmargin=-5pt ,
innerbottommargin=1pt, splittopskip=\topskip, splitbottomskip=0.3\topskip, 
skipabove=0.6\topsep]
{example}{Example}

        % max fraction of floats at top
     % max fraction of floats at bottom
       % min fraction of page that must be text
  % fraction of page that must be occupied for a float-only page

\newcommand{\Sp}{\textnormal{Sp}}
\newcommand{\ket}[1]{| #1 \rangle}
\setstcolor{blue}

\newcommand{\sqr}{\CD}

\begin{document}

\title{Fiber Bundle Fault Tolerance of GKP Codes}

\author{Ansgar G. Burchards}
\affiliation{Dahlem Center for Complex Quantum Systems, Physics Department, Freie
Universit{\"a}t Berlin, Arnimallee 14, 14195 Berlin, Germany}
\author{Steven T.\ Flammia}
\affiliation{Department of Computer Science, Virginia Tech, Alexandria, USA}
\affiliation{Phasecraft Inc., Washington DC, USA}
\author{Jonathan Conrad}
\affiliation{Dahlem Center for Complex Quantum Systems, Physics Department, Freie
Universit{\"a}t Berlin, Arnimallee 14, 14195 Berlin, Germany}
\affiliation{Helmholtz-Zentrum Berlin f{\"u}r Materialien und Energie, Hahn-Meitner-Platz 1, 14109
Berlin, Germany}

\maketitle

\begin{abstract}
  We investigate multi-mode GKP (Gottesman--Kitaev--Preskill) quantum error-correcting codes from a geometric perspective. 
  First, we construct their moduli space as a quotient of groups and exhibit it as a fiber bundle over the moduli space of symplectically integral lattices. 
  We then establish the Gottesman--Zhang conjecture for logical GKP Clifford operations, showing that all such gates arise from parallel transport with respect to a flat connection on this space. 
  Specifically, non-trivial Clifford operations correspond to topologically non-contractible paths on the space of GKP codes, while logical identity operations correspond to contractible paths.
\end{abstract}

\section{Introduction}
\label{sec:introduction}
Quantum computers hold the promise of providing a speedup over classical computers on specific computational tasks such as factoring \cite{Shor_1997_polynomial}.
In order for quantum computers to realize this potential 
and perform arbitrarily large quantum computations, the key requirement, besides universality, is that of fault tolerance \cite{Shor_1997_polynomial, aharonov_1999_faulttolerant, Knill_1998_resilient}. 
This means that given high enough control accuracy and sufficiently weak noise with sufficiently low correlations, arbitrarily long and
precise computations are possible with only a modest cost in additional overhead. 
On a macroscopic level, it means that sufficiently good individual components are robust enough to suppress errors arbitrarily well when combined in the right way and in sufficient numbers.

While a widely used, general, and intuitive concept, the term fault tolerance is often used in an ad-hoc fashion tailored to specific details of the context or platform under discussion. 
Three examples include \emph{transversal gates}~\cite{gottesman_1997_stabilizer}, which are fault-tolerant as they do not propagate errors; \emph{topological gates}~\cite{Kitaev_2003_fault_tolerant}, implemented through the braiding of localized anyonic excitations on the ground space of a topological error-correcting code; and \emph{Gaussian unitary operations}, which perform logical Clifford operations on GKP codes~\cite{Gottesman_2001_encoding, Weedbrook_2012_gaussian}, with fault tolerance due to the fact that the relative increase of displacement errors is bounded.
These different types of protocols span a variety of physical systems and conceptual frameworks, and while all three types of gates are sensibly deemed fault tolerant within their respective domains, they do not seem to arise from a single universal construction method upon first inspection. 
For example, while the braiding of anyons is clearly topological in nature, this does not seem to be the case for either transversal or Gaussian unitary operations. 

In their work~\cite{gottesman_fibre_2017}, Gottesman and Zhang conjectured that in fact all types of fault-tolerant gates can be regarded as topological, so that a universal perspective on fault tolerance becomes possible. 
Here we prove the Gottesman--Zhang conjecture for logical Clifford gates on multi-mode GKP codes by showing that they arise as the parallel transport maps induced by a flat projective connection on a bundle over the space of all GKP codes.
Denoting by $\mgkp$ the manifold of all GKP stabilizer groups on $n$ modes and by $U_{G}$ the set of Gaussian unitary operators, we prove the following theorem.
\begin{theorem} 
\label{thm:geometric_ft}
Any path $\CC: [0,1] \rightarrow \mgkp$ has a unique Gaussian unitary implementation $U: [0,1] \rightarrow U_{G}/\{e^{i \phi} \mathds{1}\}$ with $U(0) = \mathds{1}$ and $\CC(t) = U(t) \CC(0) U^{\dagger}(t)$. 
Letting Gaussian unitaries act on the trivial bundle over $\mgkp$ with standard fiber the space $\R^{2n}$ of displacement operators as $U\big(\CC, D(\xi)\big) = (U \CC U^{\dagger}, U D(\xi) U^{\dagger})$, then the associated connection is flat. 
\end{theorem}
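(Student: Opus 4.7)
The plan is to identify each connected component of $\mgkp$ as a homogeneous space under the action of the projective Gaussian unitary group $\GU = U_G/\{e^{i\phi}\mathds{1}\} \cong \Sp(2n,\R)\ltimes \R^{2n}$, verify that the isotropy group at each point is discrete, and then argue that flatness follows essentially from the fact that $\GU$ covers $\mgkp$.

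First I would unpack the $\GU$-action $\CC \mapsto U\CC U^{\dagger}$ in terms of the symplectically integral lattice $\CL$ underlying $\CC$. Preservation of $\CC$ by $(S,a)\in \Sp(2n,\R)\ltimes \R^{2n}$ forces $S$ to lie in the discrete group $\Sp(\CL)$ of symplectic lattice automorphisms, while $a$ must satisfy the phase-trivialization condition $\omega(a,\lambda)\in 2\pi\mathbb{Z}$ for all $\lambda\in\CL$, hence lie in the discrete symplectic dual $\CLp\subseteq \R^{2n}$. Thus the isotropy is the discrete subgroup $\Gamma_{\CC}\cong \Sp(\CL)\ltimes \CLp$ of $\GU$. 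Combined with transitivity on each connected component, which I would take from the fiber-bundle description of $\mgkp$ established earlier in the paper, this realizes the orbit map $p:\GU\to \mgkp$, $U\mapsto U\CC(0)U^{\dagger}$, as a covering projection.

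The first half of the theorem would then follow from the standard path-lifting property of covering spaces: a path $\CC(t)$ admits a unique continuous lift $U:[0,1]\to \GU$ with $U(0)=\mathds{1}$ and $p\circ U=\CC$, i.e.\ $\CC(t)=U(t)\CC(0)U(t)^{\dagger}$. Uniqueness is automatic because any two such lifts differ by a continuous map $[0,1]\to \Gamma_{\CC(0)}$ into a discrete group, which must be constant and hence equal to the identity.

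For flatness, parallel transport on the trivial bundle $\mgkp\times \R^{2n}$ is defined by $D(\xi_0)\mapsto U(t)D(\xi_0)U(t)^{\dagger}=D(S(t)\xi_0+a(t))$, so the holonomy around a loop $\gamma$ based at $\CC_0$ is exactly the endpoint $U(1)\in \Gamma_{\CC_0}$ of its lift. Under a homotopy of $\gamma$ to the constant loop, this endpoint varies continuously within the discrete group $\Gamma_{\CC_0}$ and is therefore constant, forcing $U(1)=\mathds{1}$ and hence trivial holonomy on every contractible loop; flatness follows. The main obstacle I anticipate is the rigorous phase book-keeping needed to establish discreteness of $\Gamma_{\CC}$ and to verify that the orbit map $p$ is a genuine covering, rather than only a surjective submersion with zero-dimensional fibers, with respect to the smooth structure on $\mgkp$ inherited from its presentation as a fiber bundle over $\mlat$.
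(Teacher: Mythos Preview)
Your approach is essentially the paper's: it too realizes each connected component $\mgkpd=\GU/N(\CS_\sqr)$ as a quotient by a discrete normalizer, so that $\GU\to\mgkpd$ is a covering map, and then obtains existence/uniqueness of $U(t)$ from the path-lifting property (Lemma~\ref{lem:unique_GU_lifts}) and flatness from the homotopy-lifting property (Lemma~\ref{lem:flatness}). One small correction relevant to the phase book-keeping you flag: for general $S\in\textnormal{Aut}^S(\CL)$ the displacement component of an element of the isotropy must lie in the coset $\CLp+\eta_S$ with $\eta_S\in\tfrac{1}{2}\CLp$ (cf.\ Section~\ref{subsec:gkp_cliffords} and Example~\ref{example:square_GKP_qutrit_Sgate}), so $\Gamma_\CC$ is not literally $\Sp(\CL)\ltimes\CLp$ unless $\CL$ is symplectically even---but discreteness, which is all your argument uses, is unaffected.
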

Intuitively, implementing a path on a space of error correcting codes via some unitary $U(t)$ guarantees protection from errors at all times during gate implementation. Flatness of the connection generating parallel transport both guarantees robustness to control inaccuracies and that implemented logical gates are a feature of the path's topology only.
The proof of Theorem~\ref{thm:geometric_ft} can be found in Section~\ref{sec:geometric_fault_tolerance}.
To the best of our knowledge this is the first proof of the Gottesman--Zhang conjecture for a multi-mode bosonic code; see also Ref.~\cite{conrad_2024_gkp_rosetta} by the authors which provides a different perspective on the single-mode case. 

The paper is organized as follows. Section~\ref{sec:prelims} reviews the necessary background on continuous variable quantum systems and lattice theory. In Section~\ref{sec:lattices}, we explore the symmetry properties of symplectically integral lattices and derive an expression for their moduli space. Section~\ref{sec:GKP} reviews some background on GKP codes, formulated in lattice theoretic language. Section~\ref{sec:GKP_moduli_space} presents a construction of the GKP moduli space, while Section~\ref{sec:geometric_fault_tolerance} offers a proof of the Gottesman--Zhang conjecture for GKP Clifford operations. We conclude in Section~\ref{sec:conclusion}.

\section{Preliminaries}
\label{sec:prelims}
In this section we introduce some preliminary notions regarding continuous variable quantum systems, the theory of lattices, and quantum error correction. We also give a short introduction to vector and fiber bundles in Appendix~\ref{app:fiber_intro}.
While we provide all necessary background, we refer the interested reader to Refs.~\cite{Weedbrook_2012_gaussian, Gerry_Knight_2004_optics} for additional background on continuous variable physics, to Ref.~\cite{ConwaySloane_1988} for a broad overview on lattice theory and to Refs.~\cite{baez_1994_Gauge, frankel_2011_Geometry} for more detailed treatments of bundle theory.

\subsection{Continuous variable quantum systems}
An operator basis on the Hilbert space of $n$ modes is given by the well known \emph{position and momentum operators}, denoted $\hat{q}$ and $\hat{p}$, on each mode. 
Throughout we will reference these operators through the ‘qqpp’-ordered vector
\begin{equation}
    \hat{x} = (\hat{q}_1, \dots, \hat{q}_n, \hat{p}_1, \dots , \hat{p}_n )^{T} \, .
\end{equation}
The position and momentum operators satisfy the commutation relations
\begin{equation}
    [\hat{x}_{i} , \hat{x}_j] = i J_{ij}
\end{equation}
where the matrix
\begin{equation}
    J = J_{2n} = \begin{pmatrix}
        0 & \mathds{1}_{n \times n} \\
        -\mathds{1}_{n \times n} & 0 
    \end{pmatrix} 
\end{equation}
denotes the standard \emph{symplectic form}. 
For brevity we will refer to the standard symplectic vector space $(\R^{2n}, J_{2n})$ simply as $\R^{2n}.$
A second set of important operators are the \emph{displacement operators}, which are generated by linear functions in the position and momentum operators
\begin{equation}
    D(\xi) \coloneqq \text{exp}\big(-i \sqrt{2 \pi} \xi^{T} J \hat{x}\big)\, .
\end{equation}
These operators owe their name to their action on the position and momentum operators
\begin{equation}
    D(\xi)^\dagger \hat{x} D({\xi}) = \hat{x} + \sqrt{2 \pi} \xi ,
\end{equation}
and satisfy the multiplication and commutation relations
\begin{equation}
\begin{split}
    D(\xi) D(\eta)  & =  e^{-i \pi  \xi^{T} J \eta} D(\xi + \eta) \\  & =  e^{-i 2 \pi  \xi^{T} J \eta} D(\eta) D(\xi) .
\end{split}
\end{equation}
\subsection{Gaussian unitaries}
Any unitary operator that is generated by a degree-2 polynomial in the position and momentum operators is called a \emph{Gaussian unitary}. 
We denote the set of Gaussian unitaries by $U_G$ and the Gaussian unitaries modulo phases by $\GU \coloneqq U_{G}/\{ e^{i \phi} \mathds{1} \} $. 
The full set $\GU$ is conveniently parametrized as $U_{\eta, S} \coloneqq D(\eta)\hat{S}$, where $\hat{S}$ denotes the \emph{metaplectic representation} of $S \in \text{Sp}_{2n}(\R).$

\subsection{Metaplectic representation}
The metaplectic representation is a double valued unitary representation of the symplectic group $\Sp_{2n}(\R)$, up to factors of $\pm 1$. 
Formally this is a representation of the double cover of $\Sp_{2n}(\R)$ whose values on the two branches differ by a factor of $-1$ \cite{guaita_2024_representation}. We denote the image of $S \in \Sp_{2n}(\R)$ under the metaplectic representation by $\hat{S}$. All such $\hat{S}$ are Gaussian unitaries and the representation satisfies the intertwining relation
\begin{equation}
    \hat{S}^\dagger \hat{x} \hat{S} = S \hat{x} \, .
\end{equation}
From the definition of displacement operators one further obtains
\begin{equation}
    \hat{S} D(\xi) \hat{S}^\dagger = D(S\xi).
\end{equation}

\subsection{Lattices}
Throughout we will make extensive use of the concept of a \emph{lattice}.
\begin{definition}[Lattice]
    A lattice $\CL$ is a discrete, additive subgroup of a vector space. 
\end{definition}
We will exclusively consider lattices contained within some symplectic vector space $V$ and define the rank of a lattice as the dimension of the space spanned by its constituent vectors. A lattice is called full rank if $\text{rank}(\CL) = \text{dim}(V)$.
\begin{definition}[Dual Lattice]
    The (symplectic) dual lattice of a full-rank lattice $\CL$ is 
    \begin{equation}
        \CLp \coloneqq \Big\{ \eta \in V \, \big\vert\, \forall \xi \in \CL: \eta^{T} J \xi \in \mathbb{Z}  \Big\},
    \end{equation}
    where $V$ is the symplectic vector space containing $\CL$.
\end{definition}
A lattice is called (symplectically) \emph{self-dual} if $\CL = \CLp$ and  
 integral if $\CL \subseteq \CLp$.
We refer to any set of linearly independent vectors $\{\xi_{i} \}$ whose integer linear combinations span $\CL$ as a basis of $\CL$ and to the matrix $M = (\xi_1, \dots, \xi_{2n})^{T}$ as a \emph{generator} of $\CL$. 
Any two generators of the same lattice $M, M'$ are related via left multiplication by a unimodular matrix, i.e.\ $M = U M'$ with $U$ an invertible integer matrix. 
If the lattice is full rank then $M$ is invertible and the dual lattice has a canonical \emph{dual generator} and basis $M^{\perp} = (\xi_{1}^{\perp}, \dots, \xi_{2n}^{\perp})^{T}$ defined via $M^{\perp} J  M^{T} = \mathds{1}$.

Given a choice of generator one defines the associated \emph{Gram matrix} as the matrix whose entries are the symplectic inner products between basis vectors.
\begin{definition}[Gram Matrix]
    The Gram matrix associated to a lattice generator $M$ is
     \begin{equation}
         A = MJM^{T} \, ,
     \end{equation}
      explicitly its entries are given in terms of basis elements as $A_{ij} = \xi_{i}^{T} J \xi_{j}$.
\end{definition}
For integral lattices we have $A \in \mathbb{Z}^{2n \times 2n}$ and we will refer to lattices with $A \in 2\mathbb{Z}^{2n \times 2n}$ as \emph{symplectically even}.

It can been shown that the Gram matrix of every lattice can be brought into \emph{standard form} via an appropriate choice of generator \cite{Conrad_2022_lattice_perspective}.
\begin{definition}
\label{def:standard_form}
    A Gram matrix $A$ is said to be in standard form if $A = J_{2} \otimes D$
     where $D$ is a diagonal matrix with positive integer entries, which are non-increasing along the diagonal.
\end{definition}

This standard form was already introduced in the original paper~\cite{Gottesman_2001_encoding} and classifies lattices up to symplectic equivalence \cite{Conrad_2022_lattice_perspective}. 
A drawback of this standard form, however, is that it is not unique for any given symplectically integral lattice. 
In order to remedy this we introduce a unique version of the standard form in Section~\ref{sec:lattices}.

\subsection{Error-correcting codes}
A quantum error correcting code is a subspace of a Hilbert space satisfying the Knill--Laflamme conditions, which pertain to the protection of encoded information from noise \cite{Knill_2000_theory}. 
A common and useful way of defining such codes is through a set of stabilizing operators.
\begin{definition}[Stabilizer Code]
    Given a set of operators $\{O_1, \dots, O_n \}$ we call the joint eigenspace
    \begin{equation}
        \CC = \Big\{ \ket \psi \in \mathcal{H} \, \big\vert\, \forall i\colon O_i \ket \psi = \ket \psi  \Big\}
    \end{equation}
    the \emph{stabilizer code} associated to the set of operators. 
    The group $\CS = \langle O_1, \dots, O_{n} \rangle$ is called a \emph{stabilizer} of the code.
\end{definition}
In most treatments of qubit stabilizer codes the operators $O_{i}$ are restricted to be drawn from the Pauli group, leading to a class of highly structured codes. 
Analogously, GKP codes are stabilizer codes stabilized by displacement operators. 
As we are interested in idealized GKP codes, whose member states are formally of infinite energy, we choose to define GKP codes indirectly through their stabilizer groups.
\begin{restatable}{definition}{GKPDefinition}
    A GKP code is specified by a stabilizer of the form 
    \begin{equation}
    \label{eq:gkp_stabilizer}
        \CS = \Big\langle e^{i\phi_1} D(\xi_1), \dots, e^{i \phi_{2n}} D(\xi_{2n}) \Big\rangle 
    \end{equation}
    where the set $\CL = \big\{ \xi \in \R^{2n} \mid \exists \phi\colon e^{i \phi} D(\xi ) \in \CS \big\}$ forms a full-rank, symplectically integral lattice.
\end{restatable}

The phase components $\phi_{i}$ are commonly suppressed in the literature since important error correction properties, such as distance and encoded logical dimension of the resulting code, are independent of them. 
However, since we aim to fully parametrize the set of all existing GKP codes it is necessary to explicitly include them here.
The requirement for the set $\CL$ to form a lattice is a direct consequence of the fact that the stabilizer forms a group and integrality follows from the requirement that its elements must be mutually commuting.
Demanding the lattice $\CL$ to be full rank is equivalent to encoding only a finite dimensional quantum system as opposed to a full mode. The close connection between GKP codes and lattices has already been discussed in the original work~\cite{Gottesman_2001_encoding} as well as more recently in Refs.~\cite{Conrad_2022_lattice_perspective, Royer_2022_encoding, Harrington_Thesis, Harrington_2001_achievable}.
We now turn to discuss some of the theory of lattices before returning to GKP codes in Section~\ref{sec:GKP}.

\section{Symplectic automorphism groups and moduli space of lattices}
\label{sec:lattices}

Given a lattice $\mathcal{L}$, any symplectic transformation on the embedding vector space results in another lattice $S\mathcal{L}$. 
Two lattices related by such a transformation are said to be symplectically equivalent. 
The set of such transformations preserving a given lattice is known as its associated \textit{symplectic automorphism group}, given by
\begin{equation}
    \text{Aut}^{S}(\mathcal{L}) \coloneqq \Big\{ S \in \text{Sp}_{2n}(\mathbb{R}) \, \big\vert\, S\mathcal{L} = \mathcal{L} \Big\}.
\end{equation}
The automorphism group is directly related to logical Clifford operations on GKP codes, as we will discuss in Section~$\ref{subsec:gkp_cliffords}$.
We are here interested in the structure of the moduli space $\text{Sp}_{2n}(\mathbb{R}) /\text{Aut}^{S}(\mathcal{L})$ of lattices symplectically equivalent to a given lattice $\CL$. 
We begin by determining the structure of the automorphism group for an arbitrary full rank lattice.

\subsection{Structure of symplectic automorphism groups}
Let $M$ be a generator of a full rank lattice $\CL$. 
Under a symplectic operation $S$ the generator matrix transforms to $M S^{T}$ and thus yields the original lattice exactly if this too is a generator of $\CL$. 
As any two generators are related by left multiplication with a unimodular matrix we obtain the following characterization
\begin{equation}
    \text{Aut}^{S}(\mathcal{L}) = \Big\{ S \in \text{Sp}_{2n}(\mathbb{R}) \, \big\vert\, \exists U \in GL_{2n}( \mathbb{Z}): \,  UM  = M S^{T} \Big\}\, .
\end{equation}
In general any unimodular matrix $U$ transforms any basis of $\CL$ into another basis by left-action on the generator matrix $M \rightarrow U M$. 
However, not all such basis transformations can arise from a symplectic operation.
To see this, we introduce the \textit{adjoint action}, $\text{Ad}_{X}(Y) = X Y X^{-1}$. 
Then using the fact that the generator matrix $M$ is invertible and the relation 
\begin{equation}
    \label{eq: UMeqMadU}
    U M = M \text{Ad}_{M^{-1}}(U)
\end{equation}
we see that the basis change specified by $U$ is implementable by a symplectic operation if and only if $\text{Ad}_{M^{-1}}(U)$ is symplectic. 
From this reasoning, we obtain the following characterization of symplectic automorphisms as a function of the Gram matrix.
\begin{lemma}\label{lemma:symplectic_implementable_basis_trafo}
    The basis transformation $U \in \textnormal{GL}_{2n}( \mathbb{Z})$, with respect to generator $M$, can be symplectically implemented if and only if the map $\bullet \rightarrow U \bullet U^{T}$ preserves the Gram matrix $A = M J M^{T}.$
\end{lemma}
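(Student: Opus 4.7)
The plan is to unravel the symplecticity condition on the unique candidate matrix determined by $U$ and read it off as an equation purely on the Gram matrix $A = MJM^T$. Since $M$ is invertible, the relation $UM = MS^T$ from the preceding characterization of $\text{Aut}^S(\CL)$ fixes $S^T = M^{-1}UM = \text{Ad}_{M^{-1}}(U)$ uniquely, so the content of the lemma reduces to a single question: when does this $S$ actually belong to $\Sp_{2n}(\R)$?

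First I would substitute the explicit expression for $S$ into the defining identity $SJS^T = J$ and then conjugate both sides of that equation by $M$ (multiplying on the left by $(M^T)^{-1}$ and on the right by $M^{-1}$) in order to strip off the outer $M$ and $M^T$ factors. The resulting identity has the shape $U^T K U = K$, where $K$ is a fixed matrix built from $M$ and $J$ alone.

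Second I would identify $K = (M^T)^{-1} J M^{-1}$ with a simple expression in the Gram matrix: using $J^{-1} = -J$ together with $A = MJM^T$ one finds $K = -A^{-1}$, so the symplecticity condition takes the form $U^T A^{-1} U = A^{-1}$. Inverting and rearranging this gives the announced identity $U A U^T = A$, i.e.\ invariance of $A$ under $\bullet \mapsto U\bullet U^T$. Because every step is a reversible identity rather than a one-way implication, both directions of the iff come out simultaneously.

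The argument is purely linear-algebraic and I do not anticipate a real obstacle. The one subtlety worth flagging explicitly is the antisymmetry of the symplectic form: $J^{-1} = -J$ enters the calculation and is the reason the invariance of $A$ appears in the form $U A U^T = A$ rather than in the Euclidean-style form $U^T A U = A$ familiar from positive-definite inner products, and also the reason the final condition is phrased on $A$ itself rather than on $A^{-1}$.
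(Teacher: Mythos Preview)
Your proposal is correct and follows essentially the same route as the paper: both reduce ``$U$ is symplectically implementable'' to the symplecticity of $\text{Ad}_{M^{-1}}(U)$ and then unwind that condition into $UAU^{T}=A$ via a short invertible computation. The only cosmetic difference is that the paper computes $UAU^{T}$ and $\text{Ad}_{M^{-1}}(U)\,J\,\text{Ad}_{M^{-1}}(U)^{T}$ directly in each direction, whereas you pass through $K=-A^{-1}$ and then invert; both are one-line linear algebra.
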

\begin{proof}
    Suppose $U$ is symplectically implementable, then $\text{Ad}_{M^{-1}}(U)$ is symplectic and thus $U A U^{T} = U M J (UM)^{T} = M \text{Ad}_{M^{-1}}(U) J \text{Ad}_{M^{-1}}(U)^{T} M^{T} = M J M^{T} = A.$ Here we have used the fact that symplectic matrices preserve the symplectic form $J$. For the converse, suppose now that $A = U A U^{T}$. Then $\text{Ad}_{M^{-1}}(U) J \text{Ad}_{M^{-1}}(U)^{T} = M^{-1} U M J M^{T} U^{T} M^{-T} = M^{-1} U A U^{T} M^{-T} = M^{-1} A M^{-T} = J.$ Thus $\text{Ad}_{M^{-1}}(U)$ is symplectic.
\end{proof}

Considering the fact that every Gram matrix can be brought into standard form we arrive at the following simpler expression for the automorphism group:
\begin{corollary}
\label{cor:automorphism_symplectic_form}
    The symplectic automorphism group $\textnormal{Aut}^{\textnormal{S}}(\mathcal{L})$ of a lattice with standard form Gram matrix $A = J_{2} \otimes D$  is isomorphic to the generalized integer symplectic group
    \begin{equation}
        \textnormal{Sp}_{2n}(\mathbb{Z}; D) \coloneqq \Big\{  U \in \textnormal{SL}_{2n}(\mathbb{Z}) \, \big\vert\, U  (J_{2} \otimes D) U^{T} = J_{2} \otimes D \Big\}.
    \end{equation}
\end{corollary}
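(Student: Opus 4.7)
The plan is to apply Lemma~\ref{lemma:symplectic_implementable_basis_trafo} directly and unpack the definitions, using the standard form to identify the stabilizer of $A$ with $\Sp_{2n}(\mathbb{Z}; D)$. The content of the corollary is essentially the translation of the previous lemma into a concrete matrix group once we specialize to the standard Gram matrix.

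First, I would define a map $\varphi: \textnormal{Aut}^S(\mathcal{L}) \to \textnormal{GL}_{2n}(\mathbb{Z})$ sending each symplectic automorphism $S$ to the unique unimodular basis change it induces. Recalling that $\textnormal{Aut}^S(\mathcal{L}) = \{ S \in \Sp_{2n}(\mathbb{R}) : \exists U \in \textnormal{GL}_{2n}(\mathbb{Z}),\ UM = MS^T \}$, the matrix $U$ is uniquely determined by $S$ via $U = MS^T M^{-1}$, since $M$ is invertible; this gives $\varphi$. Lemma~\ref{lemma:symplectic_implementable_basis_trafo} states precisely that $\varphi$ is a bijection onto the stabilizer $\{U \in \textnormal{GL}_{2n}(\mathbb{Z}) : UAU^T = A\}$, since any such $U$ lifts uniquely to a symplectic $S = M^T U^T M^{-T}$ satisfying $S\mathcal{L} = \mathcal{L}$. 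Substituting the standard form $A = J_2 \otimes D$ turns the defining relation $UAU^T = A$ into exactly $U(J_2 \otimes D)U^T = J_2 \otimes D$.

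The remaining subtlety is that the definition of $\Sp_{2n}(\mathbb{Z}; D)$ demands $U \in \textnormal{SL}_{2n}(\mathbb{Z})$, whereas a priori the lemma only guarantees $U \in \textnormal{GL}_{2n}(\mathbb{Z})$. The equation $UAU^T = A$ yields only $(\det U)^2 = 1$ since $\det A \neq 0$, so one still needs to rule out $\det U = -1$. This follows from the fact that every symplectic matrix has determinant $+1$: $\det \varphi(S) = \det(MS^T M^{-1}) = \det S = 1$. So the image of $\varphi$ lies in $\textnormal{SL}_{2n}(\mathbb{Z})$, and hence in $\Sp_{2n}(\mathbb{Z}; D)$.

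Finally, one must check that $\varphi$ is a group isomorphism. A short calculation shows $\varphi(S_1 S_2) = M(S_1S_2)^T M^{-1} = (MS_2^T M^{-1})(MS_1^T M^{-1}) = \varphi(S_2)\varphi(S_1)$, so $\varphi$ is an antihomomorphism; composing with inversion (i.e.\ passing to $\psi(S) \coloneqq \varphi(S)^{-1} = MS^{-T}M^{-1}$) yields a genuine group homomorphism, which is bijective by the argument above. The only mildly tricky point is this bookkeeping around transposes and inverses when translating between the left action on the generator $M$ and the right action $M \mapsto MS^T$; everything else is a direct reading of Lemma~\ref{lemma:symplectic_implementable_basis_trafo} and the definition of the standard form.
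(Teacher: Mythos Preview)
Your proposal is correct and follows essentially the same approach as the paper: invoke Lemma~\ref{lemma:symplectic_implementable_basis_trafo} to identify the automorphism group with the unimodular stabilizer of $A$, then use $\det S = 1$ for symplectic $S$ to restrict from $\textnormal{GL}_{2n}(\mathbb{Z})$ to $\textnormal{SL}_{2n}(\mathbb{Z})$. The paper's proof is two sentences and leaves the explicit bijection and the (anti)homomorphism bookkeeping implicit; your write-up spells these out carefully, which is a welcome addition rather than a different route.
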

\begin{proof}
This follows by independence of the automorphism group from the chosen generator and Lemma~\ref{lemma:symplectic_implementable_basis_trafo}. 
The restriction to $\textnormal{SL}_{2n}(\mathbb{Z})$ is a consequence of the fact that the determinant of a symplectic matrix must equal $1$.
\end{proof}
As the elements of $\textnormal{Sp}_{2n}(\mathbb{Z}; D)$ are formally not symplectic but instead unimodular matrices, we refer to this group as the \textit{unimodular representation} of the symplectic automorphism group~\cite{Birkenhake_2004_complex}. 
Given the basis $M$ bringing the Gram matrix into standard form it holds that $\text{Aut}^{S}(\mathcal{L}) = \text{Ad}_{M^{-1}}(\text{Sp}_{2n}(\mathbb{Z}; D))$ element-wise.

An interesting and much investigated class of GKP codes is that of \emph{scaled codes}. These are GKP codes related to scaled self-dual lattices, i.e. those with $\CL = \sqrt{\lambda} \CL_{0}$ where $\lambda \in \N$ and $\CL_{0}$ self-dual \cite{Gottesman_2001_encoding, Conrad_2022_lattice_perspective, Harrington_2001_achievable, Harrington_Thesis}.
Noticing that in the case $D \propto \mathds{1}$ one has $\text{Sp}_{2n}(\mathbb{Z}; D) = \text{Sp}_{2n}(\mathbb{Z})$ we find that the automorphism groups of scaled self-dual lattices in particular reduce to the standard symplectic group over the integers. 
\begin{corollary}
    Let $\mathcal{L}$ be a scaled self-dual lattice, then 
    \begin{equation*}
        \textnormal{Aut}^{\textnormal{S}}(\mathcal{L}) \cong \textnormal{Sp}_{2n}(\mathbb{Z}).
    \end{equation*}
\end{corollary}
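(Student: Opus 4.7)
The plan is to reduce the corollary directly to the previous Corollary~\ref{cor:automorphism_symplectic_form}, which identifies $\textnormal{Aut}^{\textnormal{S}}(\CL)$ with the generalized integer symplectic group $\textnormal{Sp}_{2n}(\mathbb{Z}; D)$ whenever the Gram matrix of $\CL$ has standard form $J_{2} \otimes D$. Thus the task is twofold: first, exhibit a generator of the scaled self-dual lattice whose Gram matrix is already in standard form with $D$ a scalar multiple of the identity; second, observe that in this scalar case the defining relation of $\textnormal{Sp}_{2n}(\mathbb{Z}; D)$ collapses to the usual symplectic condition.

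For the first step, start from a generator $M_{0}$ of the self-dual lattice $\CL_{0}$ chosen such that its Gram matrix is in the standard form of Definition~\ref{def:standard_form}. Self-duality forces $A_{0} = M_{0} J M_{0}^{T} = J_{2n} = J_{2} \otimes \mathds{1}$, since the only positive-integer diagonal matrix $D$ yielding a self-dual lattice is $D = \mathds{1}$. Taking $M = \sqrt{\lambda}\, M_{0}$ as generator of $\CL = \sqrt{\lambda}\, \CL_{0}$, the associated Gram matrix becomes
\begin{equation}
    A = M J M^{T} = \lambda\, J_{2n} = J_{2} \otimes (\lambda \mathds{1}),
\end{equation}
which is in standard form with $D = \lambda \mathds{1}$.

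For the second step, apply Corollary~\ref{cor:automorphism_symplectic_form} to conclude $\textnormal{Aut}^{\textnormal{S}}(\CL) \cong \textnormal{Sp}_{2n}(\mathbb{Z}; \lambda \mathds{1})$. The defining condition is $U (J_{2} \otimes \lambda \mathds{1}) U^{T} = J_{2} \otimes \lambda \mathds{1}$, which after dividing by the nonzero scalar $\lambda$ reads $U J_{2n} U^{T} = J_{2n}$. Combined with $U \in \textnormal{SL}_{2n}(\mathbb{Z})$, this is precisely the condition that $U$ lie in $\textnormal{Sp}_{2n}(\mathbb{Z})$, completing the proof.

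There is no real obstacle here; the only subtle point to justify cleanly is that the standard-form representative of a self-dual lattice has $D = \mathds{1}$, which follows from Definition~\ref{def:standard_form} together with the self-duality condition $\CL_{0} = \CL_{0}^{\perp}$. Everything else is bookkeeping with the factor of $\lambda$.
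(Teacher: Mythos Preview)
Your proof is correct and follows essentially the same approach as the paper: both arguments pick a standard-form generator of the self-dual lattice $\CL_{0}$ to obtain $A_{0} = J_{2} \otimes \mathds{1}$, scale by $\sqrt{\lambda}$ to get $A = J_{2} \otimes (\lambda \mathds{1})$, invoke Corollary~\ref{cor:automorphism_symplectic_form}, and then observe that $\textnormal{Sp}_{2n}(\mathbb{Z}; \lambda \mathds{1}) = \textnormal{Sp}_{2n}(\mathbb{Z})$ because the scalar $\lambda$ cancels from the defining relation. Your version is slightly more explicit about the last step and about why self-duality forces $D = \mathds{1}$, but there is no substantive difference.
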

\begin{proof}
    Since $\mathcal{L}$ is a scaled lattice it is of the form $\mathcal{L} = \sqrt{\lambda} \mathcal{L}_{0}$ where $\lambda \in \mathbb{N}$ and $\mathcal{L}_{0}$ is symplectically self-dual i.e.\ $\mathcal{L}_{0}^\perp = \mathcal{L}_{0}$. 
    The standard form of the Gram matrix of symplectically self-dual lattices is $A = M_{0} J M_{0}^{T}= J_{2} \otimes \mathds{1}$. 
    As the generator matrix $M$ of $\mathcal{L}$ is obtained as $M = \sqrt{\lambda} M_{0}$ we find that $A = M J M^{T} = \lambda J_{2} \otimes \mathds{1} = \lambda J_{2n}.$ The automorphism group is thus $\textnormal{Sp}_{2n}(\mathbb{Z}, \lambda \mathds{1}) = \textnormal{Sp}_{2n}(\mathbb{Z})$.
\end{proof}

\subsection{The moduli space of symplectically integral lattices}

In order to determine the moduli space of symplectically integral lattices we make use of the following corollary slightly adapted from Ref.~\cite{Conrad_2022_lattice_perspective}:

\begin{corollary}\label{cor:generator_standard_form}
    Let $M \in \R^{2n \times 2n}$ describe the basis of a full-rank symplectically integral lattice with Gram matrix $A = MJM^{T}.$ Without loss of generality assume $M$ is chosen such that $A = J_{2}\otimes D$ is in standard form. The lattice specified by the generator 
    \begin{equation*}
        N_{D} = \bigoplus_{j=1}^{n} \sqrt{D_{jj}}\mathds{1}_{2\times 2}
    \end{equation*}
    is symplectically equivalent to the one specified by $M$. 
\end{corollary}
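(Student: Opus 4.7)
The plan is to produce an explicit symplectic transformation $S$ carrying the lattice generated by $M$ onto the one generated by $N_D$. The general principle I will invoke is that two full-rank lattices in $(\R^{2n}, J)$ whose generators $M_1, M_2$ have equal Gram matrices, $M_1 J M_1^T = M_2 J M_2^T$, are automatically symplectically equivalent: defining $S = M_2^T M_1^{-T}$ one has $M_1 S^T = M_2$, and $S J S^T = J$ follows from equality of Gram matrices by a short manipulation exploiting $J^{-1} = -J$. Hence it suffices to exhibit $N_D$ as having the same Gram matrix as $M$ and then read off the transformation.

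The first substantive step is to verify $N_D J N_D^T = J_2 \otimes D = A$. Reading the direct sum $\bigoplus_{j=1}^{n} \sqrt{D_{jj}} \mathds{1}_{2\times 2}$ so that each $2\times 2$ block acts on a single mode's symplectic pair, this reduces to $n$ independent one-mode calculations, each contributing $D_{jj} J_2$ on the corresponding block. With Gram matrices matched, I would set $S := N_D M^{-T}$ and check symplecticity directly. From $M J M^T = A$ one obtains $M^{-T} J M^{-1} = -A^{-1}$, and analogously from $N_D J N_D^T = A$ together with the fact that $N_D$ is diagonal one finds $N_D A^{-1} N_D = -J$. Substituting gives
\[
S J S^T \;=\; N_D \bigl(M^{-T} J M^{-1}\bigr) N_D \;=\; -N_D A^{-1} N_D \;=\; J ,
\]
so $S \in \Sp_{2n}(\R)$. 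Since $M S^T = M (M^{-1} N_D) = N_D$ by construction, $S$ carries $\CL$ onto the lattice generated by $N_D$, which is the claimed symplectic equivalence.

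The only step that requires any real care is reconciling the direct-sum notation defining $N_D$ with the qqpp convention for $J$, so that $N_D J N_D^T$ literally equals $J_2 \otimes D$ as a matrix; this is a bookkeeping issue rather than a substantive obstacle. Once the conventions are pinned down, the argument collapses to the two algebraic identities above together with the invertibility of $M$, which is guaranteed by the full-rank assumption on $\CL$.
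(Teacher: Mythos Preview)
Your argument is correct. The paper does not actually prove this corollary; it simply imports the statement from Ref.~\cite{Conrad_2022_lattice_perspective}, so there is no in-paper proof to compare against. Your route---match the Gram matrices and read off the symplectic map $S = N_D M^{-T}$---is the standard one and is exactly what underlies the cited result: equality of Gram matrices $M J M^{T} = N_D J N_D^{T}$ forces $S J S^{T} = J$ via the identity $M^{-T} J M^{-1} = -A^{-1}$, and $M S^{T} = N_D$ gives the lattice equivalence.

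Your caveat about the ordering conventions is well placed and worth stating more sharply. Taken literally in the paper's qqpp convention, $N_D = \bigoplus_{j} \sqrt{D_{jj}}\,\mathds{1}_{2\times 2}$ has diagonal $(\sqrt{d_1},\sqrt{d_1},\sqrt{d_2},\sqrt{d_2},\dots)$, which does \emph{not} satisfy $N_D J N_D^{T} = J_2 \otimes D$ unless all $d_j$ coincide (for $n=2$ one gets off-diagonal entries $\sqrt{d_1 d_2}$ rather than $d_1, d_2$). The intended object is the diagonal matrix with entries $(\sqrt{d_1},\dots,\sqrt{d_n},\sqrt{d_1},\dots,\sqrt{d_n})$, i.e.\ $\mathds{1}_{2}\otimes \sqrt{D}$ in qqpp ordering, or equivalently the direct sum $\bigoplus_j \sqrt{d_j}\,\mathds{1}_2$ read in the qpqp ordering where $J$ is itself block-diagonal. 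This is, as you say, purely bookkeeping: once one fixes a consistent ordering the Gram-matrix identity holds and your computation goes through verbatim.
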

We will refer to the lattice generated by $N_{D}$ as the \emph{standard lattice} associated to the Gram matrix. 
The following example shows that the standard form of a Gram matrix introduced in Def.~\ref{def:standard_form} is not unique. 
\begin{example}
\label{example:unique_standard_form}
To see that the standard form is not unique consider e.g. the standard form Gram matrix $A = J_2 \otimes \text{diag}(5, 2)$ and the unimodular matrix
\begin{equation}
    U = \begin{pmatrix}
    2 & 0 & 0& 5\\
    1 & 0 & 0 & 2\\
    0 & -5 & -4 & 0\\
    0 & 1 & 1 & 0\\
    \end{pmatrix},
\end{equation}
which satisfy $UAU^{T} = J_2 \otimes \text{diag}(10, 1)$. 
Physically these two standard forms correspond to distinct decompositions of the logical space into either a 5-dimensional qudit and a qubit or a single 10-dimensional qudit as we discuss in  Section~\ref{subsec:gkp_paulis}.
\end{example}
Fortunately, further restricting the standard form allows one to find a unique form of the Gram matrix, which we show in Appendix~\ref{app:gram_unique_normal_form}. 
\begin{restatable}{corollary}{FrobeniusStandardForm}
    \label{corr:Frobenius_standard_form}
    (Frobenius standard form)
    Every symplectically integral lattice has a generator bringing the Gram matrix into standard form 
    \begin{equation}
        A = J_2 \otimes \CD
    \end{equation} 
    such that the matrix elements of $\CD$ satisfy $d_n |d_{n-1}| \dots |d_1$. Moreover this normal form is unique.
\end{restatable}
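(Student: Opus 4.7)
The plan is to prove existence of the Frobenius standard form by an iterative $\gcd/\text{lcm}$-reduction procedure, and to prove uniqueness by relating the diagonal entries of $\mathcal{D}$ to the Smith elementary divisors of $A$ viewed as an integer matrix.

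First I would establish an auxiliary lemma: for any positive integers $a, b$, there exists a unimodular $U \in \textnormal{SL}_{4}(\mathbb{Z})$ such that $U (J_{2} \otimes \text{diag}(a, b)) U^{T} = J_{2} \otimes \text{diag}(\text{lcm}(a,b), \gcd(a,b))$. Since $J_2 \otimes D = \begin{pmatrix} 0 & D \\ -D & 0 \end{pmatrix}$, a block-diagonal ansatz $U = P \oplus Q$ with $P, Q \in \textnormal{GL}_2(\mathbb{Z})$ gives $UAU^T = J_2 \otimes (P\,\text{diag}(a,b)\,Q^T)$, so the task reduces to realizing the Smith normal form of $\text{diag}(a,b)$ by a two-sided unimodular action --- which is a direct construction from the Bezout identity $xa + yb = \gcd(a,b)$, with signs adjusted so that $\det U = +1$. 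Embedding this into the full $2n$-dimensional Gram matrix using the index set $\{i, j, n+i, n+j\}$ yields a unimodular congruence localized to any chosen pair of symplectic blocks.

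For existence, starting from any standard form $A = J_{2}\otimes \text{diag}(d_{1}, \dots, d_{n})$ given by Def.~\ref{def:standard_form}, I would repeatedly apply the lemma to any pair $(d_{i}, d_{j})$ that is mutually indivisible. The sum $\sum_{i} d_{i}$ serves as a strictly increasing monovariant: writing $g = \gcd(a, b)$ and $a = g a', b = g b'$ with $\gcd(a', b') = 1$, we have $\text{lcm}(a,b) + g - a - b = g(a' - 1)(b' - 1) > 0$ exactly when neither divides the other. Since $\prod_{i} d_{i} = \sqrt{\det A}$ is an invariant of unimodular congruence, and each $d_i \geq 1$ forces $\max_i d_i \leq \prod_i d_i$, the sum is bounded above by $n\sqrt{\det A}$, so the iteration must terminate after finitely many steps. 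At termination every pair of entries is comparable under divisibility, and after an appropriate index permutation (implemented as a unimodular congruence acting symmetrically on both halves of the $J_2$ factor) we obtain the ordering $d_{n} \mid d_{n-1} \mid \dots \mid d_{1}$.

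For uniqueness, I would invoke invariance of Smith normal form: the elementary divisors of $A$ as an integer matrix are invariants under the two-sided unimodular action $A \mapsto PAQ$, hence a fortiori under our one-sided congruence $A \mapsto UAU^{T}$. A direct computation on $A = J_{2} \otimes \mathcal{D}$ with Frobenius $\mathcal{D}$ shows that the Smith normal form of $A$ is $\text{diag}(d_{n}, d_{n}, d_{n-1}, d_{n-1}, \dots, d_{1}, d_{1})$ --- each $2\times 2$ symplectic block $d_i J_2$ contributing a pair $(d_i, d_i)$ to the Smith invariants --- and this uniquely determines the sequence $(d_{1}, \dots, d_{n})$, establishing uniqueness of the Frobenius form.

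The main obstacle I expect is the explicit construction in the auxiliary lemma: one must verify that the $\gcd/\text{lcm}$ Smith-type reduction on the inner $2\times 2$ diagonal block lifts to a genuine unimodular congruence on the full $4\times 4$ skew-symmetric block rather than to a mere one-sided row/column operation, which requires the simultaneous choice of both $P$ and $Q$ to produce a single diagonal target. Once that local reduction is in hand, the termination argument via the sum monovariant and the Smith-invariant uniqueness both proceed by routine manipulations.
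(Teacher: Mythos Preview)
Your proposal is correct but proceeds quite differently from the paper. The paper's proof is a one-line invocation of Frobenius's classical theorem on alternating bilinear forms over principal ideal domains (quoted from Bourbaki): a lattice is a free $\mathbb{Z}$-module, the symplectic form restricts to an alternating form, and Frobenius gives both the existence of a basis with the divisibility chain and the uniqueness of the ideals $\mathbb{Z}d_i$ in one stroke; a sign flip on basis vectors makes the $d_i$ positive. Your approach is instead constructive and self-contained: you start from the non-unique standard form already available, reduce it to Frobenius form by local $\gcd/\mathrm{lcm}$ moves with a monovariant termination argument, and extract uniqueness from the Smith invariant factors of $A$ as an integer matrix. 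Both are valid; the paper's route is shorter but black-boxes the content, while yours exposes the mechanism---in particular, your uniqueness argument makes explicit that the Frobenius data $(d_1,\dots,d_n)$ is exactly the sequence of Smith invariants of $A$ with each entry repeated twice. Your stated ``main obstacle'' is not actually one: with the block ansatz $U=P\oplus Q$ one has $\det U=\det P\,\det Q$, and since $\det(P\,\mathrm{diag}(a,b)\,Q^{T})=\det(\mathrm{diag}(\mathrm{lcm},\gcd))=ab$ forces $\det P\,\det Q=1$ automatically, the lift to $\mathrm{SL}_4(\mathbb{Z})$ comes for free.
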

We say a Gram matrix with the property $d_{n}|d_{n-1}|\dots|d_{1}$ (read $d_{n}$ divides $d_{n-1}$ etc.) is in Frobenius standard form. The transformed Gram matrix $A = J_{2} \otimes \text{diag}(10, 1)$ of Example~\ref{example:unique_standard_form} is hence in standard form.
We henceforth refer to the Frobenius standard form value of $D$ as the \textit{type} of a lattice and denote it by $\CD$ in order to distinguish it from the non unique form.

Equipped with the above results we may determine the moduli space of symplectically integral lattices.
\begin{theorem}
\label{thm:sympl_integral_lattice_moduli_space}
    The moduli space of full rank symplectically integral lattices in $\big(\R^{2n}, J \big)$ is
    \begin{equation*}
        \mlat \coloneqq \bigcup_{\CD}\textnormal{Sp}_{2n}(\mathbb{R})/\textnormal{Sp}_{2n}(\mathbb{Z}; \CD)
    \end{equation*}
     where the index $\CD$ runs over all positive integer diagonal matrices with non-increasing elements along the diagonal satisfying $d_{n}|d_{n-1}|\dots|d_{1}.$
\end{theorem}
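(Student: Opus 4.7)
The plan is to establish that the Frobenius standard form type $\CD$ from Corollary~\ref{corr:Frobenius_standard_form} is a complete symplectic invariant of full-rank symplectically integral lattices, and then to recognise each orbit of $\Sp_{2n}(\R)$ as a homogeneous space of the form $\Sp_{2n}(\R)/\Sp_{2n}(\mathbb{Z}; \CD)$. The moduli space then arises as the disjoint union over all admissible types.

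First I would verify that $\CD$ is a symplectic invariant. If $\CL' = S\CL$ with $S \in \Sp_{2n}(\R)$ and $M$ is a generator of $\CL$, then $M' = M S^T$ generates $\CL'$ and the Gram matrices coincide, since $S^T J S = J$ gives $M' J (M')^T = M J M^T$. Uniqueness in Corollary~\ref{corr:Frobenius_standard_form} then forces $\CL$ and $\CL'$ to share the same Frobenius type. Conversely, Corollary~\ref{cor:generator_standard_form} shows that every lattice of type $\CD$ is symplectically equivalent to the standard lattice with generator $N_\CD$, so any two lattices of matching type lie in a single $\Sp_{2n}(\R)$-orbit. Together these two observations identify the orbit space with the set of admissible types, namely diagonal positive integer matrices $\CD$ with $d_n \mid d_{n-1} \mid \dots \mid d_1$.

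Finally I would apply the orbit--stabilizer theorem to each orbit separately: for any lattice $\CL$ of type $\CD$, the orbit is homeomorphic to $\Sp_{2n}(\R)/\textnormal{Aut}^{\textnormal{S}}(\CL)$, and Corollary~\ref{cor:automorphism_symplectic_form} identifies $\textnormal{Aut}^{\textnormal{S}}(\CL)$ with $\Sp_{2n}(\mathbb{Z}; \CD)$ via conjugation by a generator putting the Gram matrix into standard form. Collecting the orbits over all admissible $\CD$ then yields $\mlat$ exactly as stated. The main subtlety lies in this last step: the identification of the stabilizer depends on the chosen generator, but different choices produce conjugate subgroups of $\Sp_{2n}(\R)$, and conjugation induces a homeomorphism of the corresponding coset spaces, so the homogeneous space appearing in the union is well defined independently of this choice.
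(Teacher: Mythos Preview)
Your proposal is correct and follows essentially the same route as the paper: both arguments use Corollary~\ref{corr:Frobenius_standard_form} to assign a unique type $\CD$, Corollary~\ref{cor:generator_standard_form} to show that $\Sp_{2n}(\R)$ acts transitively on lattices of a given type, and Corollary~\ref{cor:automorphism_symplectic_form} together with orbit--stabilizer to identify each orbit with $\Sp_{2n}(\R)/\Sp_{2n}(\mathbb{Z};\CD)$. The only cosmetic difference is that the paper fixes the standard lattice $\CL_{\sqr}$ as basepoint and records the explicit embedding $U \mapsto N_{\CD}^{-1} U N_{\CD}$, whereas you argue more abstractly that any choice of basepoint yields a conjugate stabilizer and hence a homeomorphic quotient.
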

\begin{proof}
    By Corollary~\ref{corr:Frobenius_standard_form} each lattice $\CL$ has a unique type $\CD$ satisfying the specifications in the theorem. 
    Given the standard lattice $\CL_{\sqr}$ generated by $N_{\CD}$ as defined in Corollary~\ref{cor:generator_standard_form} there exists $S \in \Sp_{2n}(\R)$ with $S\CL_{\sqr} = \CL$ by the same corollary. 
    One also has that $S \CL_{\sqr} = S' \CL_{\sqr}$ if and only if $S = S'K$ with $K \in \text{Aut}^{S}(\CL_{\sqr})$ and it follows that we may write the moduli space of type-$\CD$ lattices as the quotient $\textnormal{Sp}_{2n}(\mathbb{R})/\text{Aut}^{S}(\CL_{\sqr})$. Employing Corollary~\ref{cor:automorphism_symplectic_form} we may denote this space $\textnormal{Sp}_{2n}(\mathbb{R})/\textnormal{Sp}_{2n}(\mathbb{Z}; \CD)$ where $U \in \textnormal{Sp}_{2n}(\mathbb{Z}; \CD)$ embeds into $\textnormal{Sp}_{2n}(\mathbb{R})$ via $ U \mapsto N_{\sqr}^{-1} U N_{\sqr}$. 
    As a consequence of Corollary~\ref{cor:generator_standard_form} a lattice of each type exists and therefore the moduli space of symplectically integral lattices is the union of these spaces. 
\end{proof}

\begin{remark}
The space $\mlat$ can also be considered from the perspective of abelian varieties; see Ref.~\cite{Birkenhake_2004_complex} for more details. 
\end{remark}

We will see below that non-identity Clifford gates on GKP codes arise from non-contractible loops in $\mlat$. While the space $\mlat$ is in general high-dimensional,
in the single-mode case its connected components are the $3$-dimensional quotient spaces $\text{Sp}_{2}(\R) / \text{Sp}_{2}(\mathbb{Z})$ which are topologically equivalent to a 3-sphere with a trefoil knot $\gamma$ removed \cite{milnor_introduction_2016}. We illustrate this space in Fig.~\ref{fig:moduli_space} and refer the interested reader to references~\cite{milnor_introduction_2016, ghys_2006_Lorenz} for more details on the topology of this space. 
Clifford gates on single-mode GKP codes can then be seen to arise from non-contractible loops that interlink nontrivially with the trefoil knot. We also note that the first homotopy group of $S^{3} - \gamma$ is given by $B_3$, the braid group on three strands. The construction hence also yields a homomorphism from this braid group to the Clifford group modulo Paulis.

\section{GKP codes}
\label{sec:GKP}
We now return to the topic of GKP codes.
Their connection to the theory of lattices originates from the form of their stabilizer groups which we recall here.

\GKPDefinition*

Given a generating set as in Eq.~\eqref{eq:gkp_stabilizer}, we refer to the vector of phases $\{\phi_i\}_{i=1}^{2n}$ as the \emph{sector} of the code. Crucially the sector of a given stabilizer depends on the chosen lattice generator $M = (\xi_{1}, \dots, \xi_{2n})^{T}$ and distinct choices of pairs ($M$, $\phi$) can yield the same stabilizer.
We refer to the choice $\phi=0$ as the trivial sector and note that this choice specifies a unique stabilizer independent of the chosen generator for every even lattice. 
In Section~\ref{sec:GKP_moduli_space} we will be occupied with the study of the set of all stabilizers of this form.

First let us note a striking feature of GKP codes --- they allow for the implementation of both logical Pauli and Clifford operations by Hamiltonians that are, respectively, degree-1 or degree-2 polynomials in the position and momentum operators.

\subsection{Pauli operators}
\label{subsec:gkp_paulis}
Given a lattice basis such that the Gram matrix assumes standard form $A = J_2 \otimes D$, the Gram matrix of the dual lattice is given by $A^{\perp} = J_{2} \otimes D^{-1}$. 
Consider the displacement operators corresponding to the canonical dual lattice basis vectors
\begin{equation}
    \overline{X}_{i} = D(\xi_{i}^{\perp}), \quad \overline{Z}_{i} = D(\xi_{i+n}^{\perp})  
\end{equation}
for $i \in \{1, \dots, n \}$.
These are logical operators as they commute with all stabilizers and commute pairwise except for the nontrivial commutation relations
\begin{equation}
    \overline{Z}_{i}\overline{X}_{i} = \text{exp}(2 \pi i / d_{i} )\overline{X}_{i} \overline{Z}_{i} \, .
\end{equation}
When restricting to the code space one further has
\begin{equation}
    \overline{Z}_{i}^{d_i} = \overline{X}_{i}^{d_i} = \mathds{1}
\end{equation}
up to phases.
It is apparent that the set of operators generated by $\{\overline{Z}_{i}, \overline{X}_{j}\}$ forms a representation of a multi-qudit Heisenberg-Weyl group with, in general, mixed local dimensions, and naturally decomposes the code space into a product of tensor factors of dimensions $\{d_{i}\}$. 
The fact that the standard form of a symplectically integral lattice's Gram matrix is not unique is hence equivalent to the availability of multiple such decompositions.

\subsection{Clifford operators}
\label{subsec:gkp_cliffords}
Logical Clifford operations must preserve the logical Pauli group under adjoint action. 
As the Gaussian unitary matrices preserve the set of displacement operators it is natural to construct Clifford operations from this set. 
Let us consider the trivial sector of a GKP stabilizer of type $\CD$
\begin{equation}
    \CS_{\sqr} \coloneqq \Big\{ e^{i \phi(\xi)} D(\xi) \, \big\vert\, \xi \in \CL \Big\}.
\end{equation}
Here $\phi(\xi)$ denote phases that occur even in the trivial sector as a consequence of the fact that products of commuting displacements may yield the negative of a displacement if the respective symplectic product is odd. 
See Example~\ref{example:square_GKP_qutrit_Sgate} for such a case. 
A Gaussian unitary operator $U_{\eta, S}$ implements a logical Clifford operation exactly if it belongs to $N(\CS_{\sqr})$ --- the normalizer of $\CS_{\sqr}$ within the Gaussian unitaries
\begin{equation}
    N(\CS_{\sqr}) = \Big\{U_{\eta, S} \in \GU \, \big\vert\, \text{Ad}_{U_{\eta, S}}(\CS_{\sqr}) = \CS_{\sqr} \Big\} \, .
\end{equation}
One can see that $U_{\eta, S} \in N(\CS_{\sqr})$ requires $S \in \text{Aut}^{\text{S}}(\mathcal{L})$ as otherwise the underlying lattice does not stay invariant.

However, Gaussian unitaries $U_{0, S}$ with $S\in \text{Aut}^{\text{S}}(\mathcal{L})$ do not necessarily leave the sector invariant and generally need to be followed by a displacement in order to constitute an automorphism of the stabilizer group, an issue also discussed in Refs.~\cite{ Conrad_2022_lattice_perspective, Royer_2022_encoding}.
We find
\begin{equation}
    N(\CS_{\sqr}) = \Big\{ U_{\eta, S} \, \big\vert\, \eta \in \CLp + \eta_{S}, \, S \in \text{Aut}^{S}(\CL) \Big\}
\end{equation}
where we have defined $\eta_S = \frac{1}{2\pi} \sum_{i=1}^{2n} \Delta \phi(\xi_i) \xi^{\perp}_{i}$ with $\Delta \phi (\xi) \coloneqq \phi(S^{-1}\xi) - \phi(\xi)$.
The vector $\eta_S$ assumes values within $\frac{1}{2}\CL^{\perp}$  and generally vanishes in the case of symplectically even lattices which include the standard square and hexagonal encodings of a qubit into a single mode. 
Obtaining the normalizer of a code with nontrivial sector is straightforward as each such code differs from the trivial sector by a displacement $D(\xi)$ and the normalizer is then obtained from the trivial sector normalizer as $D(\xi)N(\CS_{\sqr}) D(\xi)^{\dagger}$.

\begin{example}[Qutrit \texttt{S}-Gate]
\label{example:square_GKP_qutrit_Sgate}
Suppose we encode a qutrit into a single mode via the trivial sector of the square GKP code, i.e.\ the code with stabilizer $\CS = \langle D(\xi_1), D(\xi_{2}) \rangle$ where $\xi_{1} = \sqrt{3} (1,0)^{T}$ and $\xi_{2} = \sqrt{3} (0,1)^{T}$. 
If we wish to implement the Clifford operation $\overline{X}_{1}  \mapsto \overline{X}_{1} \overline{Z}_{1}, \, \overline{Z}_{1} \mapsto \overline{Z}_{1}$ we might put to use the lattice automorphism $K$ satisfying $K \xi_{1}^{\perp} = \xi_{1}^{\perp} + \xi_{2}^{\perp}$ and $K \xi_{2}^{\perp} = \xi_{2}^{\perp}$. 
Under this transformation the stabilizer transforms into $\CS' = \hat{K} \CS \hat{K}^{\dagger} = \langle D(\xi_1), D(\xi_{1} + \xi_{2}) \rangle$. 
Note that this is not the original stabilizer as $\CS'$ includes the element $D(-\xi_{1}) D(\xi_1 + \xi_{2}) = - D(\xi_{2})$. 
In fact $\CS' = \langle D(\xi_1), -D(\xi_{2}) \rangle$, differing from the original stabilizer by a sign in front of the second generator. 
In order to remove the  sign from the generator and return to the original code we must apply the follow-up shift operation $T = D(\frac{1}{2}\xi_{2}^{\perp})$ which can be seen to satisfy $T \CS' T^{\dagger} = \CS.$
\end{example}
\vspace{-0.7cm}
\subsubsection{Logical action}
Given $U_{\eta,S} \in N(\CS)$ it is straightforward to obtain its logical Clifford action. 
From $\text{Aut}^{S}(\CL) = \text{Aut}^{S}(\CL^{\perp})$ it follows that the symplectic action $S$ can be rewritten as a unimodular left-action on the dual generator: $M^{\perp}S^{T} = VM^{\perp}$. 
Employing the symplectic representation $\boldsymbol{\sigma} \in (\mathbb{Z}_{d_1} \times \dots \times \mathbb{Z}_{d_n})^{\times 2}$ of a logical Pauli operator $P(\boldsymbol{\sigma}) = \Pi_{i} \overline{X}_{i}^{\boldsymbol{\sigma}_{i}} \overline{Z}_{i}^{\boldsymbol{\sigma}_{i+ n}}$ up to phase, we observe that the symplectic representation vector transforms as
$\boldsymbol{\sigma} \mapsto V^{T} \boldsymbol{\sigma} \text{ mod } D $ where ‘mod $D$’ means that the $i$-th and $n+i$-th rows of a vector are to be taken modulo $D_{ii}$. 
The remaining Pauli part of the Clifford action implemented by $U_{\eta, S}$ is determined by the coset in $\CLp/\CL$ to which $\eta - \eta_{S}$ belongs.
\vspace{-0.1cm}
\begin{example}[Qutrit \texttt{CNOT} Gate]
    Consider two qutrits encoded into two modes via a square GKP code on each mode. 
    Suppose we wish to implement the gate $\texttt{CNOT}_{1 \rightarrow 2}$. 
    Under this action $\boldsymbol{\sigma}$ transforms according to 
    \begin{equation*}
        V^{T} = \begin{pmatrix}
1 & 0 & 0 & 0\\
1 & 1 & 0 & 0\\
0 & 0 & 1 & -1\\
0 & 0 & 0 & 1\\
\end{pmatrix} \, .
    \end{equation*}
    since for the square GKP code $M^{\perp} \propto \mathds{1}$ we have $S = V^{T}$ and the logical gate is implemented by the physical gate $\widehat{S} = \widehat{V^{T}}$.
\end{example}

\begin{figure}
\center
\vspace{-0.3cm}
    \includegraphics[width= \textwidth]{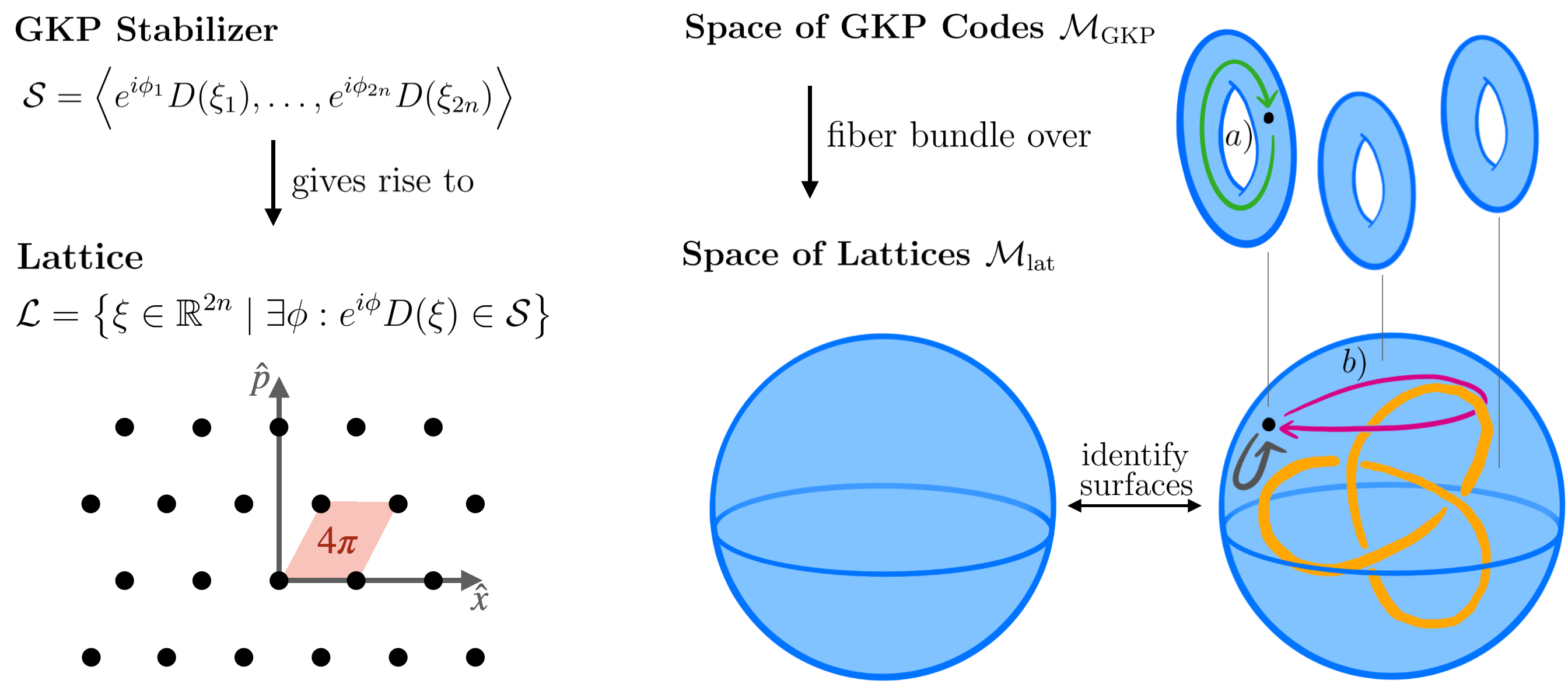}
    \caption{ Illustration of the GKP moduli space. Top Left: A GKP stabilizer is generated by a set of $2n$ displacement operators with phases. Bottom Left: The stabilizer gives rise to a symplectic lattice in phase space. Pictured is the 2-dimensional hexagonal lattice. The unit cell area of this lattice must be a multiple of $2\pi$. The shown area of $4 \pi$ corresponds to an encoded qubit. Top Right: The GKP moduli space is a union of tori, each corresponding to the set of possible phases $\{\phi_{i}\}$ for a given point in the lattice moduli space. a) A path on the GKP moduli space wrapping around a fiber implements a logical Pauli gate. Bottom Right: In the single mode case ($n=1$) each connected component of the space of lattices is topologically equivalent to $S^{3}- \gamma$, the 3-sphere with a trefoil knot removed, the 3-sphere is pictured as two filled balls with surfaces identified. b) Paths on $\mgkp$ whose projections interlink with the trefoil knot can give rise to nontrivial logical Clifford action (up to Paulis). Contractible paths necessarily produce trivial logical Clifford action (up to Paulis). The Pauli part of the logical action is specified by how the path wraps around the fibers over its projection. \vspace{-0.1 cm}}
    \label{fig:moduli_space}
\end{figure}

\section{The moduli space of GKP codes}
\label{sec:GKP_moduli_space}

Let us now turn our attention to the space of all GKP codes. By definition each GKP code has an underlying lattice $\CL$ and can be identified with its stabilizer
\begin{equation}
    \CS = \Big\langle e^{i \phi_1} D(\xi_1), \dots , e^{i \phi_{2n}} D(\xi_{2n}) \Big\rangle \, .
\end{equation}
Our construction of the GKP moduli space proceeds along the same lines as that of the moduli space of symplectically integral lattices. 
For each lattice type $\CD$, we define a standard GKP code via a stabilizer $\CS_{\sqr}$. 
We then show that each GKP stabilizer $\CS$ with underlying lattice of identical type $\CD$ can be obtained via Gaussian unitary operation as $\CS = U  \CS_{\sqr} U^\dagger$. 
Identifying different Gaussian unitaries producing the same code we obtain the set of all codes with underlying lattice of type $\CD$ as the coset manifold 

\begin{equation}
    \mgkpd = \GU/N(\CS_{\sqr}) \, .
\end{equation}
It follows that the GKP moduli space is given by
\begin{equation}
    \mgkp = \bigcup_{\CD} \mgkpd 
\end{equation}
with the index running over all lattice types as in Theorem~\ref{thm:sympl_integral_lattice_moduli_space}.

\subsection{Standard code}
Given a type $\CD$ we define the \emph{standard code} with respect to the standard lattice generator $M_{\sqr} = (\xi_1, \dots, \xi_{2n})^{T}$ introduced in Corollary~\ref{cor:generator_standard_form}. Specifically, the standard code is the corresponding trivial sector stabilizer
\begin{equation}
    \CS_{\sqr} = \Big\langle D(\xi_1), \dots, D(\xi_{2n}) \Big\rangle .
\end{equation}
Suppose now $\CS$ is another GKP stabilizer with underlying lattice of type $\CD$. 
Then it possesses a generator $N = (\eta_1, \dots , \eta_{2n})^{T}$ bringing the associated Gram matrix into Frobenius standard form. 
By Corollary~\ref{cor:generator_standard_form} there exists a symplectic map $S$ such that $N = M_{\sqr} S^{T}$. Denoting by $\hat{S}$ the metaplectic representation of $S$ we find that $\CS = U \CS_{\sqr} U^\dagger$ with
\begin{equation}
\label{eq:transitive_unitary}
    U = D\bigg(\sum_{i=1}^{2n} \frac{-\phi_{i}}{2\pi} \eta^{\perp}_{i}\bigg)\hat{S}
\end{equation}
where $\phi$ denotes the sector of $\CS$. 
Eq.~\eqref{eq:transitive_unitary} defines a Gaussian unitary matrix.

\subsection{Restriction bundles}
The GKP moduli space is itself a fiber bundle over the lattice moduli space in a natural way. 
Each code projects to its underlying lattice via a map $\pi: \mgkp \rightarrow \mlat$.
The fiber over a given lattice $\CL$ is then given by the set of displacement operators up to displacements in $\CLp$, that is the 2n-dimensional torus $\CT_{\CLp} = \R^{2n} / \CLp$. For a given GKP code in the fiber this torus also constitutes the set of error sectors, or equivalently, their syndromes~\cite{Conrad_2022_lattice_perspective}.
We will refer to this construction as the \emph{GKP bundle}, which is illustrated in Fig.~\ref{fig:moduli_space}. We show below that the GKP bundle is nontrivial as a bundle.
In particular we will show that the restriction of the GKP bundle to an arbitrary loop $\gamma$ in $\mlat$ assumes the structure of a mapping torus. 
\begin{definition}
    The mapping torus of a self-homeomorphism $f: F \rightarrow F$ is the quotient space
    \begin{equation}
       \T_{f} = \frac{ [0,1] \times F }{ \big(0, x\big) \sim \big(1, f(x)\big) } \, .
    \end{equation}
    Equipped with the natural projection map from the unit interval to $S^{1}$ it is a bundle over the circle.
\end{definition}
The ‘twist’ in the mapping torus is determined by the lattice automorphism performed by traversing the loop $\gamma: S^{1} \rightarrow \mlat$. 
In fact, for the lattice automorphism $S \in \text{Aut}^{S}(\CL)$ it will be the inverse of the homeomorphism on the phase space torus $f_{S}: \CT_{\CLp} \rightarrow \CT_{\CLp}$ defined as $f_{S}(\xi) = S \xi \text{ mod } \CLp.$

\begin{figure}
    \centering
\[\begin{tikzcd}
	{p^{*}\mathcal{M}_{GKP}} &&&&& {I \times\mathcal{T}_{\mathcal{L}_{0}^{\perp}}} \\
	& I &&& I \\
	\\
	& {S^{1}} &&& {S^{1}} \\
	{\gamma^{*}\mathcal{M}_{GKP}} &&&&& {\mathbb{T}_{f}}
	\arrow["\varphi"{description}, curve={height=-12pt}, from=1-1, to=1-6]
	\arrow[from=1-1, to=2-2]
	\arrow["{\tilde{p}}"{description}, from=1-1, to=5-1]
	\arrow["{\varphi^{-1}}"{description}, curve={height=-12pt}, from=1-6, to=1-1]
	\arrow[from=1-6, to=2-5]
	\arrow["{q_{f}}"{description}, from=1-6, to=5-6]
	\arrow[from=2-2, to=2-5]
	\arrow["{p'}", from=2-2, to=4-2]
	\arrow["Id", from=2-5, to=2-2]
	\arrow["{p'}", from=2-5, to=4-5]
	\arrow["Id", from=4-2, to=4-5]
	\arrow[from=4-5, to=4-2]
	\arrow[from=5-1, to=4-2]
	\arrow["\theta"{description}, curve={height=-12pt}, from=5-1, to=5-6]
	\arrow[from=5-6, to=4-5]
	\arrow["{\theta^{-1}}"{description}, curve={height=-12pt}, from=5-6, to=5-1]
\end{tikzcd}\]
\caption{Commutative diagram illustrating the proof of Theorem~\ref{thm:restriction_bundles}. Objects on the outer commutative square are fiber bundles, their projection maps pointing to the base spaces on the inner commutative square. We first construct the isomorphism $\varphi$ between two bundles over the unit interval and then construct the sought-after isomorphism $\theta$ through the universal product of the quotient topology.}
\label{fig:comm_diagram}
\end{figure}
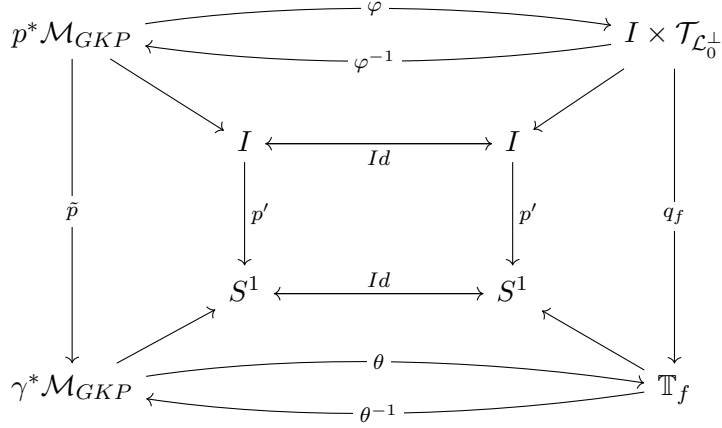

\begin{theorem}
\label{thm:restriction_bundles}
    The pullback bundle $\gamma^{*}\mgkp$ of the GKP bundle along a closed loop $\gamma: S^{1} \rightarrow \mlat$ is isomorphic to a mapping torus $\T_{f}$. The twist $f$ is determined by the lattice automorphism performed by traversing $\gamma$.
\end{theorem}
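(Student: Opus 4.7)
The plan is to follow the commutative diagram of Figure~\ref{fig:comm_diagram}: trivialize the pullback bundle over the contractible interval $I$, identify the endpoint gluing with the mapping-torus relation, and descend to an isomorphism over $S^1$ via the universal property of the quotient $p: I \to S^1$.

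First I would set $\CL = \gamma(*) \in \mlat$ and $\tilde\gamma \coloneqq \gamma \circ p: I \to \mlat$, so that $\tilde\gamma(0) = \tilde\gamma(1) = \CL$. By Theorem~\ref{thm:sympl_integral_lattice_moduli_space}, the projection $\Sp_{2n}(\R) \to \mlat$ is a principal bundle with discrete structure group, and path-lifting is therefore unique once the starting point is fixed: $\tilde\gamma$ lifts to a unique continuous $\hat\gamma: I \to \Sp_{2n}(\R)$ with $\hat\gamma(0) = \mathds{1}$, and $S \coloneqq \hat\gamma(1)$ lies in $\text{Aut}^{S}(\CL)$ because $\tilde\gamma(1) = \CL$.

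Next I would use $\hat\gamma$ to construct an explicit trivialization $\varphi: p^{*}\gamma^{*}\mgkp \to I \times \CT_{\CLp}$ over $I$. The fiber of $p^{*}\gamma^{*}\mgkp$ over $t$ is $\R^{2n}/\tilde\gamma(t)^{\perp}$; since symplectic maps intertwine symplectic duality, $\hat\gamma(t)\CLp = \tilde\gamma(t)^{\perp}$, and so $\xi \mapsto \hat\gamma(t)^{-1}\xi$ descends to a fiber homeomorphism. Setting $\varphi(t, [\xi]) \coloneqq (t, [\hat\gamma(t)^{-1}\xi])$ yields a continuous bundle isomorphism over $I$. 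At $t=0$ this identification is the identity, and at $t=1$ it is $S^{-1}$. Both endpoints of $I$ cover the single basepoint of $S^1$, so the gluing implicit in $p$ that collapses $p^{*}\gamma^{*}\mgkp$ onto $\gamma^{*}\mgkp$ translates, under $\varphi$, into $(0, \eta) \sim (1, S^{-1}\eta)$ on $I \times \CT_{\CLp}$. This is precisely the mapping-torus identification with twist $f = f_S^{-1}$. Composing $\varphi$ with the quotient $q_f: I \times \CT_{\CLp} \to \T_f$ and invoking the universal property of the quotient topology yields the desired bundle isomorphism $\theta: \gamma^{*}\mgkp \to \T_f$, with an inverse constructed symmetrically from $\varphi^{-1}$.

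The main delicate step I anticipate is verifying that $\theta$ is a bundle morphism over $S^1$, not merely a homeomorphism of total spaces; this reduces to checking that $\varphi$ is compatible with the projection $p^{*}\gamma^{*}\mgkp \to I$, and that $q_f$ identifies exactly those points of $I \times \CT_{\CLp}$ lying over the same point of $S^1$, so that the universal property applies at the level of bundles. A secondary observation worth including is that since $\text{Aut}^{S}(\CL)$ is discrete, homotopic loops rel.\ basepoint produce the same endpoint $S$, so the twist $f_{S}^{-1}$ is in fact a well-defined invariant of the homotopy class of $\gamma$; this canonicity is what eventually yields the homomorphism from $\pi_1(\mlat)$ to the Clifford group modulo Paulis discussed after Theorem~\ref{thm:sympl_integral_lattice_moduli_space}.
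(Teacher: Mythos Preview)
Your overall strategy is the same as the paper's: pull back to $I$, trivialize there, read off the endpoint gluing, and descend to $S^{1}$ via the universal property of the quotient. The structure is sound, and your identification of the delicate points (bundle-map compatibility, homotopy invariance of $S$) is on target.

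There is, however, a genuine gap in the computation of the twist. You write ``the fiber of $p^{*}\gamma^{*}\mgkp$ over $t$ is $\R^{2n}/\tilde\gamma(t)^{\perp}$'' and then act linearly by $\hat\gamma(t)^{-1}$. But a point of the fiber is a GKP \emph{stabilizer group}, and identifying it with a point of the torus requires choosing a reference stabilizer --- the trivial sector with respect to a fixed lattice basis. Your trivialization implicitly carries the reference along with the moving basis $\{\hat\gamma(t)\xi_{i}\}$; at $t=1$ this is the basis $\{S\xi_{i}\}$ of $\CL$, which differs from the basis $\{\xi_{i}\}$ used at $t=0$. For lattices that are not symplectically even the two associated trivial-sector stabilizers are \emph{different} groups (cf.\ Section~\ref{subsec:gkp_cliffords} and Example~\ref{example:square_GKP_qutrit_Sgate}), so the actual endpoint identification is not $\eta\mapsto S^{-1}\eta$ but the affine map $\eta\mapsto S^{-1}\eta+\eta_{S^{-1}}$ for a half-dual-lattice shift $\eta_{S^{-1}}\in\tfrac{1}{2}\CLp$. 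The paper computes this affine twist explicitly and then observes that it is isotopic through torus homeomorphisms to the linear map $f_{S^{-1}}$, whence the two mapping tori are still bundle-isomorphic. Your claimed gluing $(0,\eta)\sim(1,S^{-1}\eta)$ is therefore only literally correct for even lattices; in general you must either track the affine correction or supply the isotopy argument to remove it.
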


\begin{proof}
    Let $I = [0,1]$ denote the unit interval and let $p = \gamma \circ p'$ where $p':I \rightarrow S^{1}$ is the natural projection onto the circle.  
    The pullback bundle $p^{*}\mgkp \subset I \times \mgkp$ is a fiber bundle equipped with a bundle morphism $\Tilde{p}\colon p^{*}\mgkp \rightarrow \gamma^{*}\mgkp$.
    As a fiber bundle over the unit interval $p^{*}\mgkp$ is isomorphic to a trivial bundle.
    Let us construct such an isomorphism. First we denote $\CL_{0} = p(0)$ and let $\{\xi_{i} \}$ be a basis of $\CL_{0}$.
    By construction each element of $\mgkp$ is a stabilizer group and we begin by defining the homeomorphism $\overline{\varphi}\colon p^{*}\mgkp|_{0} \rightarrow \CT_{\CLp_{0}}$ which maps the stabilizer $\langle e^{i\phi_1} D(\xi_1), \dots, e^{i\phi_{2n}} D(\xi_{2n})\rangle$ to $\frac{-1}{2 \pi} \sum_{i}\phi_{i}\xi_{i}^{\perp} \text{ mod } \CLp_{0}$. 
    We further denote by $K$ the unique lift of $p$ into $\Sp_{2n}(\R)$ based at the identity \footnote{By a lift of $p\colon I\to \mlat$ into $Sp_{2n}(\mathbb{R})$ we mean a map $K\colon I \to Sp_{2n}(\mathbb{R})$ such that $\forall x\colon K(x) p(0) = p(x)$. Such a lift is not unique as even for $x=0$, where one has $p(x=0) = \mathcal{L}_0$, any $K(0)$ that is an element of $\text{Aut}^{S}(\mathcal{L}_0)$ satisfies the relationship. Among these we make the choice $K(0) = \mathds{1} \in Sp_{2n}(\mathbb{R})$, i.e.\ we base the lift at the identity, which uniquely specifies $K(x)$ for all other $x$.}.
    This allows us to extend $\overline{\varphi}$ to the bundle isomorphism $\varphi: p^{*}\mgkp \rightarrow I \times \CT_{\CLp_{0}}$ with $\varphi(x, \CS) = \big(x, \overline{\varphi}(\hat{K}^{\dagger}(x) \CS \hat{K}(x))\big)$.
    
    Consider now the homeomorphism $f: \CT_{\CLp_{0}} \rightarrow \CT_{\CLp_{0}}$ defined as $f = \pi'' \circ \varphi \circ \sigma \circ \varphi^{-1} \circ \pi'$ where $\pi''$ denotes the projection map of $I \times \CT_{\CLp_{0}}$ onto $\CT_{\CLp_{0}}$, $\pi'(c) = (0, c)$ embeds $\CT_{\CLp_{0}}$ into $I \times \CT_{\CLp_{0}}$ and $\sigma$ is the shift $(0, x) \mapsto (1, x)$ within $I \times \mgkp$.
    Let $\T_{f}$ denote the corresponding mapping torus. 
    Then we have the bundle morphism $q_f: I \times \CT_{\CLp_{0}} \rightarrow \mathbb{T}_{f}$ defined by $q_{f}(x,y) = (p'(x), y )$.
    By the universal property of the quotient topology the composition $q_f \circ \varphi $ descends to a bundle morphism $\theta: \gamma^{*}\mgkp \rightarrow \mathbb{T}_{f}$ since for $x, y \in p^{*}\mgkp$ one has $\Tilde{p}(x) = \Tilde{p}(y) \implies q_f \circ \varphi(x) = q_f \circ \varphi(y)$, and the topology on $\gamma^{*}\mgkp$ agrees with the quotient topology inherited from $p^{*}\mgkp$.
    Conversely we can make use of the property that for $x,y \in I \times \CT_{\CLp_{0}}$ it follows that $q_f(x) = q_f(y) \implies \Tilde{p} \circ \varphi^{-1}(x) = \Tilde{p} \circ \varphi^{-1}(y)$ and the composition $\Tilde{p} \circ \varphi^{-1}$ descends to a continuous inverse of $\theta$ which is thus a bundle isomorphism between $\gamma^{*}\mgkp$ and $\T_{f}$.
    Plugging into the definition of $f$ we see that for $\xi \in \CT_{\CLp_0}$ we have $f(\xi) = K(1)^{-1}\xi + \eta_{K^{-1}}  \text{ mod } \CLp_0$.
    As $f$ is continuously deformable through homeomorphisms into $f_{K^{-1}}(\xi) = K(1)^{-1}\xi  \text{ mod } \CLp_0$, i.e.\ the inverse of the Clifford homeomorphism on $\CT_{\CLp_{0}}$ inherited from $K(1)$, it follows that  $\gamma^{*}\mgkp$ is bundle isomorphic to  $\T_{f_{K^{-1}}}$.
\end{proof}
In particular, this shows that the GKP bundle is nontrivial as restrictions of trivial bundles produce trivial mapping tori only.

\section{Geometric fault tolerance}
\label{sec:geometric_fault_tolerance}
Our aim in this section is to prove the geometric fault tolerance conjecture given by Gottesman and Zhang in Ref.~\cite{gottesman_fibre_2017} for the case of GKP Pauli and Clifford operations. The general version of the conjecture reads as follows. 

\begin{conjecture}[Gottesman and Zhang]
    Fault-tolerant logical gates can always be expressed as arising from monodromies of an appropriate fiber bundle with a flat projective connection.
\end{conjecture}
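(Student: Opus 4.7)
The statement as posed is a research-level conjecture, not a self-contained theorem, so a proof proposal must be a program rather than a single argument. My plan is to abstract the skeleton of Theorem~\ref{thm:geometric_ft} into a construction that is independent of the particular physical system, and then to verify the resulting picture case-by-case for the three canonical examples of fault tolerance: transversal gates on block codes, braidings in topological codes, and Gaussians on GKP. The input for each case would be a triple $(\mathcal{H}, \mathcal{E}, \mathcal{G})$, consisting of a physical Hilbert space structure, a privileged family $\mathcal{E}$ of correctable errors, and a distinguished group $\mathcal{G}$ of "geometrically implementable" unitaries (transversal unitaries on a tensor factorization, locally-generated braid operators on a defect configuration space, or Gaussian unitaries on phase space). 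A \emph{code} is then a subspace stabilized by a group $\mathcal{S}$ of operators in a suitable commutant and protected against $\mathcal{E}$; the moduli space $\mathcal{M}_{\mathrm{code}}$ is the quotient of all such codes by the natural automorphisms of $\mathcal{S}$.

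Next I would build the bundle and a candidate connection. The assignment sending a code $\mathcal{C}$ to a canonical associated object — its logical algebra, or a distinguished subset of $\mathcal{E}$ such as the displacement lattice in the GKP case — forms a fiber bundle over $\mathcal{M}_{\mathrm{code}}$, on which $\mathcal{G}$ acts by conjugation. The key technical lemma to prove in each case is a unique path-lifting property: every continuous path $\mathcal{C}:[0,1]\to\mathcal{M}_{\mathrm{code}}$ admits a unique lift $U:[0,1]\to\mathcal{G}/Z(\mathcal{G})$ with $U(0)=\mathds{1}$ and $\mathcal{C}(t)=U(t)\mathcal{C}(0)U(t)^\dagger$. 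This is exactly the statement that Theorem~\ref{thm:geometric_ft} establishes for Gaussians, and its analogue in the topological setting is essentially the path-connectedness of anyon worldlines; in the transversal setting it becomes an assertion about lifts of codes within a Grassmannian of tensor-factor subspaces.

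With the lift in hand, flatness of the induced projective connection should follow by showing that every contractible loop in $\mathcal{M}_{\mathrm{code}}$ lifts to a loop in $\mathcal{G}/Z(\mathcal{G})$ whose endpoint lies in the pointwise stabilizer of the code — i.e., acts trivially on the logical fiber up to a global phase. The monodromy then gives a homomorphism from $\pi_1(\mathcal{M}_{\mathrm{code}})$ to the group of projective logical automorphisms. Conversely, for any fault-tolerant logical gate $g\in\mathcal{G}/Z(\mathcal{G})$ one uses transitivity of the $\mathcal{G}$-action on a stratum to exhibit $g$ as the endpoint of a lift of some loop, closing the bijection between fault-tolerant gates modulo phase and homotopy classes of loops in the moduli space. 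Doing this in each concrete case would recover the GKP Clifford picture from this paper, the anyon-braiding picture of Kitaev, and a Grassmannian monodromy picture for transversal gates.

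The principal obstacle — and the reason the general statement remains a conjecture — is the absence of a uniform definition of "fault-tolerant" and of the associated "moduli space" that simultaneously covers all three flavors. Transversal fault tolerance is essentially discrete and combinatorial; topological fault tolerance is continuous but lives on a configuration space of defects whose fibers are nontrivial TQFT Hilbert spaces; Gaussian fault tolerance, as treated here, is smooth and algebraic. I would therefore expect the honest program to split into two halves: first, an instance-by-instance verification along the lines above, producing a separate theorem in each setting; second, the much harder categorical task of defining a functor from noise models to bundles-with-flat-connection in such a way that all three cases become specializations. Only the second step would constitute a genuine proof of the conjecture as stated, and my expectation is that it requires a new framework — perhaps a sheaf of "noise-preserving operators" over $\mathcal{M}_{\mathrm{code}}$ whose parallel transport automatically computes fault-tolerant gates — rather than any direct extension of the techniques used here.
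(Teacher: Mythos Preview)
Your reading of the situation is correct: the paper does \emph{not} prove this conjecture. It is stated as a conjecture of Gottesman and Zhang and left open in full generality; the paper's contribution is to establish the special case of GKP Clifford operations via Theorem~\ref{thm:geometric_ft}, whose proof is split into Lemma~\ref{lem:unique_GU_lifts} (unique Gaussian lift of a path, using that $N(\CS_{\sqr})$ is discrete so $\GU \to \mgkpd$ is a covering map) and Lemma~\ref{lem:flatness} (flatness via the homotopy lifting property of covering spaces). There is therefore no ``paper's own proof'' of the full statement to compare against.

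For the part of your program that overlaps with what the paper actually does --- the GKP instance --- your sketch is essentially the same argument the paper runs: discreteness of the stabilizer normalizer gives a covering, covering spaces have unique path lifts, and the homotopy lifting property yields flatness. Your framing in terms of a generic triple $(\mathcal{H},\mathcal{E},\mathcal{G})$ and a quotient $\mathcal{G}/Z(\mathcal{G})$ is a reasonable abstraction of that mechanism. The transversal and topological cases you propose to redo were already treated in Ref.~\cite{gottesman_fibre_2017} (the paper says as much in Section~\ref{sec:geometric_fault_tolerance}), so that half of your program would be a reformulation rather than new ground.

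Where your proposal is honest, and where the genuine gap lies, is exactly where you locate it: there is no uniform definition of ``fault-tolerant gate'' or of the associated moduli space that makes the conjecture a well-posed mathematical statement across all regimes simultaneously. Until that is supplied, any ``proof'' is necessarily a collection of instance-by-instance verifications, which is what both this paper and Ref.~\cite{gottesman_fibre_2017} provide. Your second half --- a functorial or sheaf-theoretic framework unifying the cases --- is the real open problem, and nothing in the paper claims otherwise.
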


We prove that the set of Gaussian unitary operators $\GU$ induces a flat projective connection on the trivial vector bundle over the GKP moduli space $\mgkp$ whose fiber over any point $\CC \in \mgkp$ is the space $\R^{2n}$ of displacement operators.
The parallel transport maps of the connection take values within $\GU$ and act on elements of the fiber via adjoint action $D(\xi) \mapsto U(t) D(\xi) U^{\dagger}(t)$. As logical Pauli operators are displacements and hence elements of the fiber, the logical action on codewords induced by parallel transport along any path is fully determined. 
We split the proof into two separate Lemmas below.

\subsection{Flatness of parallel transport}
We begin by showing that the parallel transport induced by Gaussian unitary operations on $\mgkp$ is well defined.

\begin{lemma}
\label{lem:unique_GU_lifts}
    Any path $\CC\colon[0,1] \rightarrow \mgkp$ of GKP codes has a unique Gaussian unitary implementation $U\colon [0,1] \rightarrow \GU$ that satisfies $U(0) = \mathds{1}$ and implements the path via $\text{Ad}_{U(t)} \CC(0) = \CC(t).$
\end{lemma}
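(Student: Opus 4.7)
The plan is to exhibit the conjugation orbit map $\phi_0 : \GU \to \mgkpd$ as a covering map and to invoke the unique path-lifting property of covering spaces.

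First I would argue that $\CC$ lies entirely within a single connected piece $\mgkpd$ of the decomposition $\mgkp = \bigcup_{\CD}\mgkpd$. Composing with the natural projection $\pi : \mgkp \to \mlat$ and using the fact that $\mlat$ is itself a disjoint union over the discrete invariant $\CD$ (Theorem~\ref{thm:sympl_integral_lattice_moduli_space}, together with the uniqueness of the Frobenius standard form in Corollary~\ref{corr:Frobenius_standard_form}), the type of $\CC(t)$ is locally constant along the path, so I can fix $\CD$ and work in a single $\mgkpd$.

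Next I would set $\CS_0 = \CC(0)$ and define the orbit map $\phi_0 : \GU \to \mgkpd$ by $\phi_0(U) = U\CS_0 U^{\dagger}$. Transitivity of the $\GU$-action on $\mgkpd$ (established explicitly via Eq.~\eqref{eq:transitive_unitary}) implies $\phi_0$ is surjective, and its stabilizer at $\CS_0$ is the normalizer $N(\CS_0) = \{U_{\eta,S} : S \in \text{Aut}^{S}(\CL),\ \eta \in \CLp + \eta_S\}$ described in Section~\ref{subsec:gkp_cliffords}. Both the symplectic part (pinned to the discrete integer group $\Sp_{2n}(\mathbb{Z};\CD)$ by Corollary~\ref{cor:automorphism_symplectic_form}) and the translational part (pinned to the discrete affine lattice $\CLp + \eta_S$) are discrete; since phases have been quotiented out, no continuous $U(1)$-factor survives. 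Therefore $N(\CS_0)$ is a discrete subgroup of the connected Lie group $\GU \cong \R^{2n} \rtimes \Sp_{2n}(\R)$.

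Discreteness of $N(\CS_0)$ makes its right action on $\GU$ free and properly discontinuous, so that $\phi_0$ factors through the principal $N(\CS_0)$-bundle $\GU \to \GU/N(\CS_0) \cong \mgkpd$, which is in particular a covering map. By the standard unique path-lifting property of covering maps, there exists a unique continuous lift $U : [0,1] \to \GU$ with $U(0) = \mathds{1}$ and $\phi_0(U(t)) = \CC(t)$, yielding the desired implementation. The main subtlety is verifying that $N(\CS_0)$ is genuinely discrete within $\GU$: this hinges crucially on taking the quotient by phases, without which a continuous $U(1)$-factor in the normalizer would obstruct uniqueness and the metaplectic section $\Sp_{2n}(\R) \to \GU$ would fail to be single-valued and continuous. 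This is precisely why the lemma asserts the lift takes values in $\GU$ rather than in $U_G$.
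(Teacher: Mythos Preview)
Your proposal is correct and follows essentially the same covering-space argument as the paper: both establish that $\GU \to \mgkpd$ is a covering (via discreteness of the normalizer) and invoke unique path lifting. The only cosmetic difference is that the paper takes the orbit map based at the standard code $\CS_{\sqr}$, lifts $\CC$ to some $\gamma$, and then sets $U(t)=\gamma(t)\gamma(0)^{-1}$, whereas you base the orbit map at $\CS_0=\CC(0)$ so that the lift through $\mathds{1}$ directly yields $U$; your additional remarks on discreteness and on why the phase quotient is essential are accurate elaborations rather than a different strategy.
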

\begin{proof}
    As a continuous map $\CC$ only maps into a connected component $\mgkpd = \GU/N(\CS_{\sqr})$. 
    Since $N(\CS_{\sqr})$ is discrete, $\GU$ is a covering space of $\mgkpd$ with projection map $\pi (U) = U \CS_{\sqr} U^{\dagger}$.
    Let $\gamma$ be a lift of $\CC$ into the cover and consider $U(t) = \gamma(t) \gamma(0)^{-1}$. 
    Then $\text{Ad}_{U(t)} \CC(0) = \text{Ad}_{\gamma(t)} \CS_{\sqr} = \CC(t)$ and $U(0) = \mathds{1}$.
    Given another $U'$ with these properties it follows that $\gamma'(t) = U'(t)\gamma(0)$ is a lift of $\CC$ with $\gamma'(0) = \gamma(0)$.
    A general theorem about lifts to covering spaces states that any two lifts $\gamma, \gamma'$ of a path which satisfy $\gamma(0) = \gamma'(0)$ are identical and thus $U(t) = U'(t)$, i.e.\ $U$ is unique. 
\end{proof}

We proceed by establishing the flatness of the connection, meaning that parallel transport maps along any two homotopic paths are necessarily equal.

\begin{lemma}[Flatness]
    \label{lem:flatness}
    Let $\CC_{1}, \CC_{2}\colon [0,1] \rightarrow \mgkp$ be a pair of homotopic paths, then the implementations $U_1, U_2$ provided by Lemma~\ref{lem:unique_GU_lifts} satisfy $U_{1}(1) = U_{2}(1)$.
\end{lemma}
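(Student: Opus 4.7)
The plan is to reduce the statement to the monodromy theorem for covering spaces, which is tailor-made for this situation. The setup from Lemma~\ref{lem:unique_GU_lifts} already gives a covering map $\pi: \GU \to \mgkpd$, $\pi(U) = U \CS_{\sqr} U^{\dagger}$, since $N(\CS_{\sqr})$ is discrete. Since $\CC_1, \CC_2$ are continuous and agree at $t=0$ (they are homotopic paths), they both land in a single connected component $\mgkpd$, so we work inside this single covering.

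First, I would unpack what ``homotopic'' means here in the only useful sense: $\CC_1$ and $\CC_2$ are homotopic relative to endpoints, so in particular $\CC_1(0) = \CC_2(0)$ and $\CC_1(1) = \CC_2(1)$. Let $H: [0,1] \times [0,1] \to \mgkpd$ be such a homotopy with $H(t,0) = \CC_1(t)$, $H(t,1) = \CC_2(t)$, and $H(0,s), H(1,s)$ constant in $s$.

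Next, I would pick any basepoint $U_0 \in \GU$ with $U_0 \CS_{\sqr} U_0^{\dagger} = \CC_1(0)$ and invoke the homotopy lifting property of the covering $\pi$ to obtain a unique $\tilde{H}: [0,1]\times[0,1] \to \GU$ with $\tilde{H}(0,0) = U_0$ and $\pi \circ \tilde{H} = H$. The path $t \mapsto \tilde{H}(t,0)$ is then a lift of $\CC_1$ starting at $U_0$, and $t \mapsto \tilde{H}(t,1)$ is a lift of $\CC_2$ starting at $U_0$ (because $\tilde{H}(0,\cdot)$ lifts the constant path $H(0,\cdot) \equiv \CC_1(0)$ starting at $U_0$, and the unique such lift is itself constant).

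The monodromy theorem then yields the crucial endpoint equality: the restrictions of $\tilde H$ to $\{1\}\times[0,1]$ lift the constant path $H(1,\cdot) \equiv \CC_1(1)$, so again this restriction is constant, giving $\tilde{H}(1,0) = \tilde{H}(1,1)$. Writing $\gamma_i$ for the lift corresponding to $\CC_i$, this reads $\gamma_1(1) = \gamma_2(1)$, while by construction $\gamma_1(0) = \gamma_2(0) = U_0$. Substituting into the formula $U_i(t) = \gamma_i(t)\gamma_i(0)^{-1}$ from Lemma~\ref{lem:unique_GU_lifts} gives
\begin{equation*}
    U_1(1) = \gamma_1(1) U_0^{-1} = \gamma_2(1) U_0^{-1} = U_2(1),
\end{equation*}
as required. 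Because the answer is independent of the initial lift $U_0$ (a different choice would multiply both $\gamma_i(0)$ and $\gamma_i(1)$ on the right by the same deck transformation in $N(\CS_{\sqr})$, which cancels in the quotient $\gamma_i(t)\gamma_i(0)^{-1}$), this completes the argument.

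The only delicate point, and what I expect to be the main conceptual (rather than technical) obstacle, is simply verifying cleanly that $\pi$ is indeed a covering map onto the component $\mgkpd$: this uses that $\GU$ acts transitively on $\mgkpd$ with stabilizer $N(\CS_{\sqr})$ and that this stabilizer is discrete, which is already implicit in the proof of Lemma~\ref{lem:unique_GU_lifts}. After that, everything is the standard homotopy lifting / monodromy machinery for covering spaces.
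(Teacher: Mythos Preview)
Your proposal is correct and follows essentially the same route as the paper: both arguments use that $\GU \to \mgkpd$ is a covering, lift the endpoint-fixing homotopy via the homotopy lifting property, observe that the lifts over the constant sides $\{0\}\times[0,1]$ and $\{1\}\times[0,1]$ are themselves constant, and conclude via $U_i(t)=\gamma_i(t)\gamma_i(0)^{-1}$. Your write-up is a bit more explicit about why those side lifts are constant and about basepoint independence, but the underlying argument is the same monodromy/homotopy-lifting reasoning the paper uses.
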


\begin{proof}
     Let $\CC$ be a homotopy between $\CC_{1}$ and $\CC_{2}$, i.e.\ a continuous map  $\CC: [0,1]^{2} \rightarrow \mgkp$ with $\CC(t,0) = \CC_{1}(t)$ and $\CC(t,1) = \CC_{2}(t)$. 
     As homotopic continuous maps $\CC_{1}, \CC_{2}$ only map into a single connected component $\mgkpd = \GU/N(\CS_{\sqr})$, which is covered by $\GU$. 
     Then, as a consequence of the homotopy lifting property of covering spaces, there exists a lifted homotopy $\gamma(t,x)$ between two lifts of $\CC_{1}$ and $\CC_{2}$ into $\GU$. 
     By the arguments within the proof of Lemma~\ref{lem:unique_GU_lifts} the implementing unitaries are given by $U_{1}(t) = \gamma(t,0)\gamma(0,0)^{-1}$ and $U_{2}(t) = \gamma(t,1)\gamma(0,1)^{-1}$. As $\gamma(1,x)$ and $\gamma(0,x)$ are constant as functions of $x$ it follows that $U_{1}(1) = U_{2}(1)$.
\end{proof}

Let us note that any Gaussian unitary operation transforming one GKP code $\CC_1$ into another $\CC_2$ also transforms the set of displacement Pauli operators of the first to those of the second and thus fully determines the transformation of logical information.  Together Lemmas~\ref{lem:unique_GU_lifts} and \ref{lem:flatness} constitute a proof of Theorem~\ref{thm:geometric_ft} and hence the geometric fault tolerance conjecture for the case of GKP Pauli and Clifford operations. 
To the best of our knowledge, this constitutes the first explicit example of geometric fault tolerance in a multi-mode bosonic error correction code; see, however, also our separate paper Ref.~\cite{conrad_2024_gkp_rosetta} that treats the single-mode case using algebraic tools. The main difference between our construction and those used to prove the conjecture for the case of transversal gates and a set of topological gates in Ref.~\cite{gottesman_fibre_2017} is that we do not explicitly construct a connection on a vector bundle over a submanifold within the Grassmanian $\text{Gr}_{K}(\CH)$, the manifold of all dimension-$K$ subspaces of a Hilbert space. 
Instead, the manifold upon which we construct a connection is one consisting of GKP stabilizer groups. The difference stems from the fact that GKP code-states are formally of infinite energy and hence do not form a well-defined subspace, necessitating an indirect description through the stabilizer group. 

While we have exclusively dealt with the idealized infinite energy versions of GKP codes, we expect our main results, Lemmas~\ref{lem:flatness} and \ref{lem:unique_GU_lifts}, to carry over to the finite energy setting. In the finite energy setting the GKP manifold $\mgkp$ needs to be replaced by a  manifold of approximation dependent finite-energy codes. Additionally, while we have focused on gates implemented by Gaussian unitary operations, such gates can in general deform the envelopes of finite energy states, and would need to be replaced by explicitly envelope preserving operations~\cite{Royer_2020_stabilization, matsos_2024_Universal}. We leave the detailed development of such a formulation to future work.

\section{Conclusion}
\label{sec:conclusion}

We have investigated the class of GKP quantum error correcting codes from a geometric point of view. 
In particular, we have determined the moduli space of all multi-mode GKP codes and formally established it as a bundle over the moduli space of symplectically integral lattices. 
We have shown that this bundle is non-trivial and labeled its connected components by a unique type, each corresponding to a canonical decomposition of the encoded logical system into qudits. 
Furthermore, we have shown that the well-known Gaussian implementations of logical Pauli and Clifford operations on GKP codes induce a flat projective connection on the GKP moduli space, proving for these operations the fault tolerance conjecture of Gottesman and Zhang~\cite{gottesman_fibre_2017}. 
Our work constitutes the first proof of the conjecture for a set of continuous variable codes.

We believe that our classification of GKP codes will be a useful tool aiding in the practical implementation of multi-mode codes and hope that this work can stimulate additional interest in the relationship between quantum error correction and geometry. 
Future work could investigate different types of error-correcting codes within the same framework. 
Of particular interest here would be to understand the geometry of the space of \textit{concatenated} GKP codes, as this class of codes provides an embedding of natively qubit- or qudit stabilizer codes into multi-mode GKP codes \cite{Conrad_2022_lattice_perspective} and the respective GKP moduli spaces are expected to encode specific characteristics, such as the existence of transversal gates.
Another important avenue for future exploration is extending the geometric perspective to fault-tolerant non-Clifford and non-unitary operations as well as bosonic codes beyond the GKP code~\cite{Jain_2024_quantum, denys_2024_quantum}. A detailed investigation of the formalism in the context of approximate, finite-energy GKP codes would also be of interest.

\section*{Acknowledgments} 
We thank V. Albert, J. Eisert, F. Arzani, J.
Magdalena de la Fuente and Alex Townsend-Teague for many insightful and helpful discussions. AB gratefully acknowledges V. Albert for hosting him during a research stay at the University of Maryland, which contributed to the progress of this project. AB and JC also acknowledge the support from the BMBF (RealistiQ, MUNIQC-Atoms, PhoQuant, QPIC-1, and QSolid), the DFG (CRC 183, project B04, on entangled states of matter),
the Munich Quantum Valley (K-8), the ERC (DebuQC), Quantum Berlin, as well as the Einstein Research Unit on quantum devices.

\appendix

\section{Vector and fiber bundles}
\label{app:fiber_intro}
Here we give a short and non-rigorous introduction to the concept of vector bundles and, more generally, fiber bundles. The introduction is intended for readers with little or no prior exposure to bundle theory and omits some technicalities. We refer the interested reader to Refs.~\cite{baez_1994_Gauge, frankel_2011_Geometry} for more detailed treatments.

One of the most commonly occurring objects in physics are functions $f\colon \mathcal{B} \rightarrow \C$ on a manifold $\mathcal{B}.$ Such a function assigns to every point $x \in \mathcal{B}$ a value within the vector space $\C$, or, more generally, within some other vector space $V$. In certain settings, however, it turns out to be more natural to assign to $x \in \mathcal{B}$ not values within one fixed vector space, but within different vector spaces $V_{x}$, one for each point $x\in \mathcal{B}$. This assignment of vector spaces, one to each point, gives rise to the notion of a vector bundle.

\begin{definition}[Vector Bundle]
    A vector bundle is a pair of topological spaces $\mathcal{B}$ and $\mathcal{T}$ together with an onto, continuous map $p\colon\mathcal{T}\rightarrow\mathcal{B}$, such that the pre-image $p^{-1}(x)$ of any $ x \in \mathcal{B}$, has the structure of a vector space.
\end{definition}
 One usually refers to the space $\mathcal{T}$ as the \textit{total space} and to $\mathcal{B}$ as the \textit{base space}. The pre-image $p^{-1}(x)$ of $x \in \mathcal{B}$ is called the \textit{fiber over $x$}. The generalized notion of a function, which assigns to each $x \in \mathcal{B}$ a value within the fiber over it is called a \textit{section}.
 
Simple examples of vector bundles are the cartesian products $\mathcal{B} \times V$ of the base space with a vector space, together with the natural projection $(b, v) \mapsto b$. Such bundles are called \textit{trivial bundles}. Another natural example of a vector bundle is the tangent bundle of a manifold, where the fiber over each point in the base space is simply the space of vectors tangent to that point of the manifold. The main example of a vector bundle discussed within the main text is the case where $\mathcal{B}$ parametrizes a family of error-correcting codes and where the fiber over each point is the the space $\R^{2n}$ of displacement operators. 

One of the most immediate consequences of the vector bundle construction is that one cannot immediately compare the values of a section at different points within the base space, as they assume values within different vector spaces. In order to compare two vectors, they must lie within the same vector space, and one introduces the notion of a \textit{connection} to enable this. A connection mathematically describes the action of transporting a vector along the fibers over a path $\gamma\colon[0,1]\rightarrow \mathcal{B}$. We give the following definition, which, though weaker than the usual definition of a connection, is sufficient for the purpose of this introduction.
\begin{definition}[Connection on a Vector Bundle]
    A connection on a vector bundle $(\mathcal{T}, \mathcal{B}, p)$ assigns to each path $\gamma\colon[0,1]\rightarrow \mathcal{B}$ an invertible linear map $D(\gamma)$ from $ p^{-1}\big(\gamma(0)\big)$ to $ p^{-1}\big(\gamma(1)\big)$. The assignment satisfies $D(\gamma_1 * \gamma_2) = D(\gamma_1) D(\gamma_2)$, where $*$ denotes the concatenation of paths, as well as $D(\gamma^{-1}) = D(\gamma)^{-1}$, and is independent of the parametrization of $\gamma$ \footnote{Reparametrization is any map $\gamma \mapsto\gamma \circ g$, where $g:[0,1]\rightarrow[0,1]$ is continuous with $g(0) = 0,\, g(1) = 1$ and $a \geq b \implies g(a) \geq g(b)$. Reparametrization independence ensures that the transport map depends only on the path, and not the chosen parametrization.}.
\end{definition}
Importantly, the transport map between two points is dependent on the chosen path connecting the two points. A natural example of this path-dependence is the parallel transport of a tangent vector on any curved surface. The parallel transport of a tangent vector from the north pole of a sphere to a point on the equator, for example, depends on the chosen path~\cite{baez_1994_Gauge}.

Connections which yield parallel transport maps that depend only on the topology of the path connecting two points within $\mathcal{B}$ are called \textit{flat}.
\begin{definition}[Flatness of a connection]
    A connection is called flat if the transport maps it assigns to any two homotopic paths are equal.
\end{definition}
In particular flatness implies that the transport map assigned to a contractible loop must be the identity, $\mathds{1}$, on the fiber over the loop's base point. Loops that wrap around a hole in the manifold and are not contractible, however, can give rise to nontrivial transport maps from a fiber to itself.

In the above we have focused, for simplicity, on the concept of vector bundles. More generally, however, one might want to allow target spaces other than vector spaces. Dropping the requirement that the fibers of a bundle be vector spaces then leads to the notion of a \textit{fiber bundle}. While there is no general requirement for all fibers of a bundle to be copies of some reference fiber, this is the case throughout this work. In this case case one refers to the reference fiber as the \textit{standard fiber}.

\section{A unique normal form for the Gram matrix}
\label{app:gram_unique_normal_form}

The fact that the Gram matrix allows a unique normal form is a consequence of the following theorem. 
The theorem provided here is an English language translation of the version in Ref.~\cite{Bourbaki_2007_algebre}.

\begin{theorem}[Frobenius~\cite{Frobenius_1879_theorie}]
\label{thm:frobenius}
Let A be a principal ideal domain, E a free A-module of finite dimension $n$ and $\Phi$ an alternating bilinear form on $E$. Then there exists a basis $(e_i)_{1 \leq i \leq n}$ of $E$ and an even integer $2r \leq n$, such that 1: $\Phi(e_1, e_2) = \alpha_1, \Phi(e_3, e_4) = \alpha_2, ..., \Phi(e_{2r-1}, e_{2r}) = \alpha_r$ where the $\alpha_i$ are elements $\neq 0$ of $A$, and where $\alpha_i$ divides $\alpha_{i+1}$ for $i = 1,...,r-1$. 2: All other elements $\Phi(e_i, e_j)$ for $i \leq j$ are zero. The ideals A$\alpha_i (i = 1, ..., r)$ are uniquely determined by the preceding conditions. The submodule $E^{0}$ of $E$ orthogonal to $E$ is generated by $e_{2r+1}, ..., e_{n}$. 
\end{theorem}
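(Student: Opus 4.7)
The plan is to prove existence by induction on $n = \mathrm{rank}(E)$, reducing the Gram matrix of $\Phi$ to block-diagonal form via simultaneous row-and-column operations in $GL_n(A)$, and then to establish uniqueness by identifying the $\alpha_i$ with elementary divisors of the associated $A$-linear map $\tilde\Phi: E \to E^*$, $v \mapsto \Phi(v,\cdot)$.

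For existence, if $\Phi \equiv 0$ set $r=0$; otherwise, since $A$ is a PID, the ideal $\mathfrak{a}_1$ generated by the set of values $\{\Phi(x,y) : x,y \in E\}$ is principal, say $\mathfrak{a}_1 = (\alpha_1)$ with $\alpha_1 \neq 0$. Starting from any basis $(f_i)$, a B\'ezout-style expression $\alpha_1 = \sum_{i<j} c_{ij}\Phi(f_i, f_j)$ together with simultaneous unimodular row-column operations on the skew-symmetric Gram matrix produces basis vectors $e_1, e_2$ with $\Phi(e_1, e_2) = \alpha_1$. The crucial subsequent step is the orthogonal splitting $E = \langle e_1, e_2 \rangle \oplus F$ with $F = \{v : \Phi(v, e_1) = \Phi(v, e_2) = 0\}$: because $\alpha_1$ divides \emph{every} value of $\Phi$, for any $v \in E$ one has $\Phi(v, e_i) = \alpha_1 \lambda_i$ for some $\lambda_i \in A$, so $v - \lambda_2 e_1 + \lambda_1 e_2 \in F$; this realizes $F$ as a free direct summand of rank $n-2$. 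Applying the induction hypothesis to $(F, \Phi|_F)$ yields the remaining basis vectors $e_3, \dots, e_n$ and invariants $\alpha_2, \dots, \alpha_r$, and the radical $E^0$ is then $\langle e_{2r+1}, \dots, e_n \rangle$ by construction.

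The divisibility $\alpha_i \mid \alpha_{i+1}$ is automatic: at each stage $\alpha_{i+1}$ arises as a value of the restricted form and hence of $\Phi$ itself, so it lies in the previously constructed ideal $(\alpha_i)$. For uniqueness, in the normal-form basis the matrix of $\tilde\Phi$ has Smith invariants $\alpha_1, \alpha_1, \alpha_2, \alpha_2, \dots, \alpha_r, \alpha_r, 0, \dots, 0$ --- each $\alpha_i$ appears twice because of alternation --- and uniqueness of the Smith normal form over a PID promotes the ideals $(\alpha_i)$ to intrinsic invariants of $\Phi$.

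The main obstacle I anticipate is executing the Gram matrix reduction while preserving alternation: unlike the classical Smith normal form algorithm, every row operation in position $i$ must be paired with the matching column operation in position $i$, so B\'ezout-style clearing must be carried out symmetrically. The central technical lever remains B\'ezout's identity in $A$, which together with the freedom of simultaneous operations suffices to extract $\alpha_1$ as the $(1,2)$-pivot and annihilate the first two rows and columns before proceeding to the inductive step on the $(n-2) \times (n-2)$ submatrix.
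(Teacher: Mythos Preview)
The paper does not prove this theorem: it is quoted in the appendix as a classical result of Frobenius, with the statement translated from Bourbaki, and is then used as a black box to obtain Corollary~\ref{corr:Frobenius_standard_form}. There is therefore no proof in the paper to compare your proposal against.

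That said, your plan is the standard argument and is essentially correct. The inductive scheme---take $(\alpha_1)$ to be the content ideal of $\Phi$, find $e_1,e_2$ with $\Phi(e_1,e_2)=\alpha_1$, split $E=\langle e_1,e_2\rangle\oplus F$ using that $\alpha_1$ divides every value of $\Phi$, and recurse on $F$---is exactly how the classical proof goes, and your divisibility and uniqueness arguments are fine. Two small remarks. First, the existence of a pair $e_1,e_2$ with $\Phi(e_1,e_2)=\alpha_1$ is slightly more delicate than a single B\'ezout identity, since the set of values $\{\Phi(x,y)\}$ is not a priori closed under $A$-linear combinations; the clean way is to note that $\alpha_1$ is the first Smith invariant of the Gram matrix, so some entry of $PGQ$ equals $\alpha_1$ for suitable $P,Q\in GL_n(A)$, yielding the pair. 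Second, the ``obstacle'' you flag in the last paragraph---maintaining alternation under simultaneous row/column operations---is in fact bypassed by your own splitting argument: once $e_1,e_2$ are in hand, the decomposition $E=\langle e_1,e_2\rangle\oplus F$ gives the block-diagonal form directly, with no need to clear rows and columns by paired operations.
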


\FrobeniusStandardForm*
\begin{proof}
    Every finite dimensional lattice is a free module over the principal ideal domain $\mathbb{Z}$, equipped with an alternating bilinear form by restriction of the symplectic form $J_{2n}$ to $\CL$. Theorem~\ref{thm:frobenius} guarantees a basis $\{ \xi_1, \xi_{1+n}, \xi_{2}, \xi_{2+n}, \dots, \xi_{n}, \xi_{2n} \}$ of $\CL$ such that the generator $M = (\xi_{1}, \xi_{2}, \dots, \xi_{2n})^{T}$ satisfies $M J M^{T} = J_{2} \otimes \CD$ with $\CD = \textnormal{diag}(d_1, \dots, d_n)$ and $d_{n} |d_{n-1} | \dots |d_{1}$. Two ideals $\mathbb{Z} d_1, \mathbb{Z} d_2$ are equal exactly if $d_1 = \pm d_2$ and thus each ideal can be identified with the unique non-negative integer generating it. By the change of lattice basis vector $\xi_{2i} \mapsto -\xi_{2i}$ we obtain the transformation $d_{i} \mapsto - d_{i}$ and $ \forall j \neq i: d_{j} \mapsto d_{j} $, i.e.\ all $d_{i}$ can be taken to be positive, guaranteeing uniqueness of $\CD$.
\end{proof}

\printbibliography

\end{document}